\newtheorem{prop}{Proposition}%
\newtheorem{lem}{Lemma}%
\newtheorem{defn}{Definition}%
\newtheorem{rem}{Remark}%
\journal{t.b.d.}
\begin{document}

\begin{frontmatter}

\title{Principal Component Copulas for Capital Modeling and Systemic Risk}

\author[UvT,Achmea]{K.B. Gubbels}
\ead{k.b.gubbels@tilburguniversity.edu}

\affiliation[UvT]{organization={Department of Econometrics and Operations Research, Tilburg University},
            city={Tilburg},
            country={The Netherlands}}

 \affiliation[Achmea]{organization={Achmea},
            city={Zeist},
            country={The Netherlands}}       
            
\author[Achmea]{J.Y. Ypma}

\author[UU]{C.W. Oosterlee}

\affiliation[UU]{organization={Mathematical Institute, Utrecht University},
            city={Utrecht},
            country={The Netherlands}}

\begin{abstract}

We introduce a class of copulas that we call Principal Component Copulas (PCCs). This class combines the strong points of copula-based techniques with principal component analysis (PCA), which results in flexibility when modelling tail dependence along the most important directions in high-dimensional data. We obtain theoretical results for PCCs that are important for practical applications. In particular, we derive tractable expressions for the high-dimensional copula density, which can be represented in terms of characteristic functions. We also develop algorithms to perform Maximum Likelihood and Generalized Method of Moment estimation in high-dimensions and show very good performance in simulation experiments. Finally, we apply the copula to the international stock market to study systemic risk. We find that PCCs lead to excellent performance on measures of systemic risk due to their ability to distinguish between parallel and orthogonal movements in the global market, which have a different impact on systemic risk and diversification. As a result, we consider the PCC promising for capital models, which financial institutions use to protect themselves against systemic risk.

\end{abstract}

\begin{keyword}
Copulas
\sep Principal Component Analysis
\sep Dependence modelling
\sep Capital modelling
\sep Systemic risk



\end{keyword}

\end{frontmatter}

\section{Introduction}\label{secIntro}

Modelling the dependence structure of a multivariate distribution plays a crucial role in finance and risk management. Copulas are a popular approach, since they allow to separate the choices for the marginal distributions from the dependence structure \citep{Nelson2006, Joe2014}. The Gaussian copula has popularized the use of copulas in finance due to its simplicity, especially in high-dimensional applications \citep{Li2000}. However, its disadvantage is that it cannot capture tail dependence. As a consequence, the Student {\it t} copula and its extensions have become popular \citep{Demarta2005}. Although these copulas are often used in practice, also other copula methods have become available that enhance flexibility. Vine copulas are constructed by sequentially applying bivariate copulas to build higher-dimensional copulas \citep{Aas2009}. The resulting pair copula constructions are flexible, but model complexity can quickly increase, which can be a disadvantage for internal capital models of financial institutions. 

More recently, factor copulas have received a steady increase of interest in the literature \citep{Krupskii2013, Creal2015, Oh2017, Opschoor2020, Duan2022}. Factor copulas for stock returns have shown good  performance in capturing systemic risk \cite{Oh2017} and in Value-at-Risk forecasts \cite{Duan2022}. As a result, such copulas are promising for capital modelling. Although the attention for factor copulas has been increasing, there also remain challenges. In general, no analytic expressions exist for the copula density, which makes estimation nontrivial. \cite{Oh2017} have developed a simulation-based estimation technique, but this method is numerically challenging, which has limited the number of estimated parameters. Another important challenge for factor copulas is to determine the appropriate factor structure. For return data, the factors are often based on sectoral codes \citep{Creal2015, Oh2017, Opschoor2020}. However, this approach is suboptimal compared to a data-driven clustering approach \citep{Oh2023}. This shows that finding the appropriate factor structure is in general a non-trivial task.  

In this article, we introduce a novel class of copulas to overcome the identified challenges. We develop a procedure to generate high-dimensional implicit copula structures using the convolution of stochastic generators with analytically tractable characteristic functions. Because these stochastic generators are uncorrelated, we identify them as Principal Components (PCs). Principal Component Analysis (PCA) is convenient in automatically detecting the main drivers of high-dimensional dependence. It is the most-used dimension reduction technique due to its user friendliness and wide applicability. In finance, PCA can be applied to estimate common factors in asset returns \citep{Stock2002}. However, a systematic study of the implicit copula that is generated by PCA has so far been lacking. We call such an implicit copula a PCC, and we obtain new theoretical results for PCCs that we consider promising for practical applications. First of all, we derive the copula density for a PCC, which is represented in terms of characteristic functions. This allows us to determine the joint copula density in high dimensions, while requiring only one-dimensional integrals. Second, we derive an analytic expression for the tail dependence of a copula that is generated with hyperbolic and normal distributions, which have exponential and Gaussian tails. This leads to a novel example of a tail-dependent copula, where previous analytic results have focused on heavy-tailed factors \citep{Hult2002,Malevergne2004,Oh2017}. 

Another contribution of our article is that we develop efficient algorithms to estimate high-dimensional PCCs. We propose the Generalized Hyperbolic (GH) distribution as stochastic generator family, because of its flexibility and its analytic tractability. In order to achieve Maximum Likelihood Estimation (MLE), we develop numerical methods to accurately determine the high-dimensional copula density at limited computational costs. We propose the Fast Fourier Transform \citep{Carr1999} and the COS method \citep{Fang2009} to efficiently compute the required one-dimensional integrals over characteristic functions. Due to the fast evaluation of the copula density, we achieve efficient MLE for flexible copula structures. In high dimensions, the number of correlation parameters grows rapidly. Full MLE remains feasible only when additional structure is applied to the correlation matrix, so that it can be parameterized in a parsimonious way. Although such restricted models can be beneficial, it is usually not desirable to make strong a priori assumptions on the correlation matrix. Therefore, we also propose a second estimation scheme. For high-dimensional copulas, it is natural to make a distinction between correlation parameters (second-moment parameters) and shape parameters (higher-moment parameters). We develop a novel algorithm that is designed to efficiently estimate a large number of correlation parameters using the method of moments and a small number of shape parameters based on MLE. We show excellent performance of this algorithm in high-dimensional simulation experiments.

Our final contribution is that we apply the developed copula to the international stock market in order to study global diversification and systemic risk. Recently, several copula studies have appeared to analyze systemic risk in bond markets \citep{Lucas2014}, stock markets \citep{Oh2017}, credit derivative markets \citep{Oh2018} and commodity markets \citep{Ouyang2022}. We expand upon existing studies by introducing the market distress cube to empirically study systemic risk in three directions, namely more extreme crashes, a larger global market size, and an increasing market share. We also develop a binomial test to assess the ability of various copula specifications to capture high-dimensional dependence. We find that PCCs outperform existing copulas based on information criteria and measures of systemic risk. The enhanced performance arises from their ability to distinguish between parallel movements and orthogonal movements in the global market, which have different consequences for systemic risk and diversification. As a result, we consider the PCC promising for quantitative risk management and capital modeling, which financial institutions use to protect themselves against systemic risk.

The remainder of the article is structured as follows. Section \ref{secPcc} presents the class of Principal Component Copulas. Section \ref{secEst} describes various estimation techniques. Section \ref{secApplications} presents a simulation study that focuses on the performance of the estimators, and an empirical study of tail risk in weekly returns of various market indices over the period 1998\textendash2022. 

\section{Principal Component Copulas}\label{secPcc}

The main advantage of a copula is the possibility to separate the modelling of marginal distributions from the modelling of the dependence structure \citep{Nelson2006, Joe2014}. In high dimensions, implicit copulas are commonly applied, which are constructed from analytically tractable multivariate distributions. Well-known examples of implicit copulas are Gaussian copulas, normal mixture copulas such as the (skew) $t$ copula, and factor copulas. A recent review of implicit copulas is given by \cite{Smith2023}. In general, the joint distribution function $F_{\bm{Y}}(y_1,...,y_d)$ for the stochastic vector $\mathbf{Y}$ can be expressed as $F_{\mathbf{Y}}(y_1,...,y_d)=C_{\mathbf{Y}}(F_{Y_1}(y_1),...,F_{Y_d}(y_d))$ with $F_{Y_i}$ the marginal distribution function for index $i$ and $C_{\mathbf{Y}}(u_1,...,u_d)$ the implicit copula belonging to $F_{\bm{Y}}$. If the marginal distributions are continuous, the implicit copula is unique, which is the setting of relevance to us. The copula density is then given by:
\begin{eqnarray}
c_{\bm{Y}}(u_{1},...,u_{d})&=&\frac{f_{\bm{Y}}(F_{Y_1}^{-1}(u_1),...,F_{Y_d}^{-1}(u_d))}{f_{Y_1}(F_{Y_1}^{-1}(u_1))...f_{Y_d}(F_{Y_d}^{-1}(u_d))}.\label{eqCopDens}
\end{eqnarray}
In this article, we develop a flexible and tractable procedure to develop novel $d$-dimensional implicit copula structures for the stochastic vector $\bm{Y}$ using convolutions of stochastic variables with analytically tractable probability densities and characteristic functions.

\subsection{Data generating process for the implicit copula}\label{subSecCop}

We start with considering a collection of $d$ stochastic variables $P_j$, which are called the generators of the copula. We consider convolutions of these generators using linear transformation matrices $\bm{W}\in \mathbb{R}^{d\times d}$, so that $\bm{Y}=\bm{WP}$. In first instance, the matrix $\bm{W}$ is considered to have full rank $d$, so that $\bm{W}$ is invertible. We comment on the dimension-reduced case later on. In component form, the convolution of stochastic generators $P_j$ to generate $Y_i$ is given by:
\begin{equation}
 Y_i =\sum_{j=1}^d  w_{i,j}P_j  \label{eqPca}
\end{equation}

Without loss of generality for the copula and to avoid overspecification, we take the variables $Y_i$ to be standardized with $E[Y_i]=0$ and $\text{var}[Y_i]=1$. As a result, the first and second moments of $\bm{Y}$ are fixed by the linear correlation matrix $\bm{\rho}_{\bm{Y}}$. Using the spectral theorem, we obtain the decomposition $\bm{\rho}_{\bm{Y}}=\bm{W} \bm{\Lambda} \bm{W}'$ with $\bm{W}$ the orthogonal matrix of eigenvectors and $\bm{\Lambda}$ the diagonal matrix of nonnegative eigenvalues $\Lambda_{j}$. Consequently, we identify $\bm{\Lambda}$ with the covariance matrix of the generators $\bm{P}$. This means that the implicit copula generators are uncorrelated and that generator $P_j$ has variance $\Lambda_{j}$. To each generator $P_j$ belongs an eigenvector $\bm{w}_{j}$, which is the $j$-th column of $\bm{W}$. Because the generators are uncorrelated and are connected to the eigenvectors of the copula correlation matrix, we identify them as Principal Components (PCs). The resulting copula is therefore called a PCC. We do not require the PCs to be independent. Because $\bm{W}$ is orthogonal, we have that $\bm{P}=\bm{W}'\bm{Y}$. The expectation value of $P_j$ is given by $E[{P_j}]=0$. The generators are ranked based on their eigenvalues, so that $P_1$ has the highest variance $\Lambda_1$. The first eigenvector $\bm{w}_{1}$ is usually a parallel vector for financial data with the same sign for all indices $w_{i,1}$. In that case, we see from Eq.  (\ref{eqPca}) that $P_1$ generates a joint parallel movement of the indices $Y_i$.   

The copula construction therefore starts from uncorrelated generators or PCs. The correlation structure of the implicit copula is consequently determined by the diagonal covariance matrix $\bm{\Lambda}$ for the generators and by the orthogonal transformation matrix $\bm{W}$. Flexible tail dependence structures for the copula can be obtained using appropriate families of generator distributions. If we specify the Data Generating Process (DGP) for the stochastic vector ${\bm P}$ by its continuous multivariate distribution function $F_{\bm P}$, then the distribution function $F_{\bm Y}$ and the corresponding copula $C_{\bm Y}$ are obtained by the linear transformation of Eq. (\ref{eqPca}). We therefore arrive at the following definition.

\begin{defn}\label{defPCC}
Consider two $d$-dimensional stochastic vectors $\bm{Y}$ and $\bm{P}$ that are related to each other through the linear transformation $\bm{Y}=\bm{WP}$ with $\bm{W}$ an orthogonal matrix of eigenvectors of the linear correlation matrix $\bm{\rho}_{\bm{Y}}=\bm{W} \bm{\Lambda} \bm{W}'$. Let $F_{\bm P}$ be the multivariate distribution function of the uncorrelated stochastic vector ${\bm P}$, which has diagonal covariance matrix $\bm{\Lambda}$. A Principal Component Copula $C_{\bm{Y}}$ is defined as the implicit copula for $\bm{Y}$ that follows from $F_{\bm P}$ and the linear transformation $\bm{Y}=\bm{WP}$.
\end{defn}

The core idea of this definition and the related copula construction is that novel multivariate distribution functions of the stochastic vector $\bm{Y}$ can be constructed from convolutions of the stochastic vector $\bm{P}$. This means that we apply the techniques of stochastic convolution to derive (semi-)analytic results for the multivariate distribution of $\bm{Y}$, based on suitable choices for the stochastic generator $\bm{P}$. Although this main idea is general, we introduce two specific examples that we use in our empirical studies.

\subsection{Examples}\label{subsecEx}

In high-dimensional financial applications, the eigenvalue spectrum of correlation matrices typically consists of a (few) large distinct eigenvalue(s) and a dense set of smaller eigenvalues \citep{Bun2017}. The component(s) with highest eigenvalue(s) determine the dominant collective movement(s) of the market, where the first PC typically describes the parallel movement of all indices. This first generator is therefore important for capturing systemic risk and for capital modelling, since during a market crash the indices have the tendency to go down jointly. The smallest eigenvalues typically do not differ from the spectrum that describes random noise \citep{Bun2017}. Therefore, it is natural to consider PCCs that assign a more detailed model structure to the first PC(s), while the higher PCs are modelled with less detail. When the first generator of the implicit copula is skewed and leptokurtic, high-dimensional asymmetric tail dependence can be generated in a parsimonious way. The examples that we introduce next are based on these properties and are relevant for our case study of the dependence structure in global financial markets.

\subsubsection{Hyperbolic-Normal PCC}\label{secHypNorPcc}

We introduce the Hyperbolic-Normal PCC, for which we derive the tail dependence parameter. The hyperbolic distribution is specified by the following probability density:
\begin{equation}\label{eqPdfHyp}
f_{\rm HB}(x)= \frac{\sqrt{\psi/\chi}}{2 \alpha K_1(\sqrt{\chi\psi})}e^{-\alpha \sqrt{\chi+(x-\mu)^2}+\beta(x-\mu)}.
\end{equation}
Here, $\psi = \alpha^2-\beta^2$ with $\alpha>\beta$  and $K_{\lambda}$ is a modified Bessel function of the second kind, $\mu$ is a location parameter, $\chi$ is a scale parameter, $\alpha + \beta$ describes the lower tail, and $\alpha -\beta$ the upper tail. The expressions for the mean and variance of the hyperbolic distribution are known analytically. The parameter $\mu$ is chosen such that the mean is zero. The variance $\Lambda$ is a monotonically increasing function of $\chi$. We denote the corresponding hyperbolic distribution with mean zero and variance $\Lambda$ as ${\rm HB}(\alpha,\beta,\Lambda)$. 

The DGP for $\bm{Y}$ is given by the linear transformation $\bm{Y}=\bm{W}\bm{P}$ and the following distributions for the generators:
\begin{equation} \label{eqExPccHn}
P_1 \sim  {\rm HB}(\alpha_1,\beta_1,\Lambda_1), \quad P_j \sim N (0,\Lambda_j) \quad  {\rm for} \quad 1<j \leq d.
\end{equation}
We start by considering the two-dimensional case for simplicity. Then, the second PC is normally distributed as $P_2 \sim N(0,\Lambda_2)$ with $\Lambda_2=d-\Lambda_1$. Suppose there is a positive linear correlation $\rho$ between the risk factors $Y_i$, such that $0<\rho<1$, and we apply PCA to the correlation matrix of $\mathbf{Y}$. The first PC weight vector is then the parallel movement of the risk variables $\mathbf{w}_{1}= [1,1]^T/\sqrt{2}$, while the second PC weight vector is the anti-parallel movement $ \mathbf{w}_{2}= [1,-1]^T/\sqrt{2}$. A negative correlation would cause the anti-parallel vector to be the first PC. Moreover, we have $\Lambda_2=d-\Lambda_1$ with $d=2$. This means we can parametrize the implicit copula by either $\rho$ or $\Lambda_1$. A contour plot of the Hyperbolic-Normal copula density with Gaussian marginal distributions is given in Figure \ref{figPccDens}. We prove the following proposition for the tail dependence of the two-dimensional HB-N PCC.   

\begin{prop}\label{propTail}
 Consider two independent PCs with a hyperbolic and a normal distribution: $P_1\sim {\rm HB}(\alpha_1,\beta_1,\Lambda_1)$ and $P_2\sim {\rm N}(0,\Lambda_2)$. Consider the risk factors $Y_1 = (P_1+P_2)/\sqrt{2}$ and $Y_2 = (P_1-P_2)/\sqrt{2}$. Then, the copula of $Y_1$ and $Y_2$ has a lower tail dependence parameter given by $\eta_l=2\Phi(-(\alpha_1 + \beta_1)\sigma_2)$ with $\sigma_2=\sqrt{\Lambda_2}$.
\end{prop}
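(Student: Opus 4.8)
The plan is to reduce the tail dependence coefficient to the limiting ratio of a joint tail probability to a marginal tail probability, and then to evaluate both limits by exploiting that the hyperbolic lower tail is exponential while the normal tail is Gaussian, so that the heavier hyperbolic tail governs the asymptotics. First I would record that $Y_1$ and $Y_2$ share the same marginal law: since $P_2\sim\mathrm{N}(0,\Lambda_2)$ is symmetric, $-P_2$ has the same distribution as $P_2$, so $Y_2=(P_1-P_2)/\sqrt{2}$ is distributed as $Y_1=(P_1+P_2)/\sqrt{2}$. Writing $F$ for this common continuous marginal and $q=F^{-1}(u)$, the definition of lower tail dependence gives
$$\eta_l=\lim_{u\downarrow 0}\frac{C_{\bm{Y}}(u,u)}{u}=\lim_{q\to-\infty}\frac{P(Y_1\le q,\,Y_2\le q)}{P(Y_1\le q)}.$$
The key elementary observation is geometric: multiplying the two defining inequalities by $\sqrt{2}$ and setting $s=\sqrt{2}\,q$, the event $\{Y_1\le q,Y_2\le q\}$ is $\{P_1\le s-P_2\}\cap\{P_1\le s+P_2\}=\{P_1+|P_2|\le s\}$, while the marginal event $\{Y_1\le q\}$ is $\{P_1+P_2\le s\}$. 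Hence $\eta_l$ is the $s\to-\infty$ limit of the ratio $P(P_1+|P_2|\le s)/P(P_1+P_2\le s)$.

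Next I would establish the lower-tail rate of the hyperbolic generator. From the density in Eq.~(\ref{eqPdfHyp}), as $x\to-\infty$ one has $\sqrt{\chi+(x-\mu)^2}\to-(x-\mu)$, so the exponent behaves like $(\alpha_1+\beta_1)(x-\mu)$; integrating, this yields $P(P_1\le x)\sim B\,e^{(\alpha_1+\beta_1)x}$ for a finite constant $B>0$ as $x\to-\infty$. The precise value of $B$ is irrelevant because it cancels in the ratio; what matters is only that the lower tail decays at the exponential rate $\alpha_1+\beta_1$.

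The main computation then conditions on $P_2$ and tilts by this exponential rate. Writing the marginal tail as $\int P(P_1\le s-t)\,f_{P_2}(t)\,dt$ and multiplying by $e^{-(\alpha_1+\beta_1)s}$, dominated convergence (justified below) sends the integrand to $B\,e^{-(\alpha_1+\beta_1)t}f_{P_2}(t)$, so the marginal tail is asymptotic to $B\,e^{(\alpha_1+\beta_1)s}\,M_{P_2}(-(\alpha_1+\beta_1))$ with $M_{P_2}(\theta)=e^{\theta^2\Lambda_2/2}$ the Gaussian moment generating function. The joint tail is treated identically with $P_2$ replaced by the folded normal $|P_2|$, giving $2B\,e^{(\alpha_1+\beta_1)s}\int_0^\infty e^{-(\alpha_1+\beta_1)t}f_{P_2}(t)\,dt$; completing the square in the exponent turns this integral into $e^{(\alpha_1+\beta_1)^2\Lambda_2/2}\,\Phi(-(\alpha_1+\beta_1)\sigma_2)$ with $\sigma_2=\sqrt{\Lambda_2}$. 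Dividing, the factors $B$, $e^{(\alpha_1+\beta_1)s}$ and $e^{(\alpha_1+\beta_1)^2\Lambda_2/2}$ all cancel and leave $\eta_l=2\,\Phi(-(\alpha_1+\beta_1)\sigma_2)$, as claimed.

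The hard part is the interchange of limit and integral in the tilted convolution. The tail approximation $P(P_1\le s-t)\sim B\,e^{(\alpha_1+\beta_1)(s-t)}$ holds only when $s-t\to-\infty$ and fails for large negative $t$; the point is that the tilted integrand $e^{-(\alpha_1+\beta_1)t}f_{P_2}(t)$ is integrable precisely because the Gaussian density $f_{P_2}$ suppresses the exponential growth, so the region where the approximation breaks down contributes negligibly. I would make this rigorous by noting that $g(x):=e^{-(\alpha_1+\beta_1)x}P(P_1\le x)$ is bounded on $\mathbb{R}$ (it converges to $B$ at $-\infty$ and to $0$ at $+\infty$), which furnishes the integrable dominating function $\|g\|_\infty\,e^{-(\alpha_1+\beta_1)t}f_{P_2}(t)$ and lets dominated convergence apply to both the marginal integral and the folded-normal joint integral.
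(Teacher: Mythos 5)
Your proof is correct, and it takes a genuinely different route from the paper's. The paper starts from the L'H\^opital-type identity $\eta_l=\lim_{y_q\to-\infty}\{P(Y_1<y_q\mid Y_2=y_q)+P(Y_2<y_q\mid Y_1=y_q)\}$, conditions at the density level, and evaluates the resulting ratio of integrals of $f_{P_1}f_{P_2}$ by a Laplace-type argument in which the leading contributions are asserted to come from neighbourhoods of $y_q$ and $\sqrt{2}\,y_q$. You instead stay at the level of distribution functions: the observation that $\{Y_1\le q,\,Y_2\le q\}=\{P_1+\lvert P_2\rvert\le\sqrt{2}\,q\}$ while $\{Y_1\le q\}=\{P_1+P_2\le\sqrt{2}\,q\}$ collapses both the numerator and the denominator of $C(u,u)/u$ into one-dimensional convolutions of the hyperbolic lower tail with a (folded) normal, and the exponential tilting by $e^{-(\alpha_1+\beta_1)s}$ combined with the bounded function $g(x)=e^{-(\alpha_1+\beta_1)x}P(P_1\le x)$ gives a clean dominated-convergence justification of exactly the asymptotic interchange that the paper leaves implicit. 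Both arguments ultimately reduce to the same pair of Gaussian integrals --- the tilted Gaussian over a half-line, $e^{(\alpha_1+\beta_1)^2\Lambda_2/2}\Phi(-(\alpha_1+\beta_1)\sigma_2)$, against the full moment generating function $e^{(\alpha_1+\beta_1)^2\Lambda_2/2}$ --- so the final cancellation producing $2\Phi(-(\alpha_1+\beta_1)\sigma_2)$ is identical. What your version buys is rigor (no L'H\^opital step, no uncontrolled localization of integrals) and interpretability: the coefficient appears transparently as twice the probability that the $(\alpha_1+\beta_1)$-tilted Gaussian component is positive, which is precisely the cost of replacing $P_2$ by $\lvert P_2\rvert$ in the joint event. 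What the paper's density-level formulation buys is a form that carries over directly when exchangeability fails and the two conditional probabilities must be tracked separately.
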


\begin{proof}
We start by noting that the lower tail of the hyberbolic density is governed by:
$f_{P_1}(x)\propto e^{(\alpha_1 +\beta_1)x}$. We proceed by evaluating the expression for the (lower) tail dependence coefficient $\eta_l$ \citep{McNeil2005,Joe2014}:
\begin{eqnarray}\label{eqTailCoef}
\eta_l &=& \lim_{q \downarrow 0} P(U_1 < q \lvert U_2 < q) 
= \lim_{y_q \rightarrow -\infty} \{ P(Y_1 < y_q \vert Y_2 = y_q) + P(Y_2 < y_q \vert Y_1 = y_q)\} \nonumber \\&=&\lim_{y_q \rightarrow -\infty} \frac{\int_{-\infty}^{y_q} f_{\bm{Y}}(y_1 , y_q)dy_1}{f_{Y_2}(y_q)}+\lim_{y_q \rightarrow -\infty} \frac{\int_{-\infty}^{y_q} f_{\bm{Y}}(y_q , y_2)dy_2}{f_{Y_1}(y_q)}
\end{eqnarray}
Next, we evaluate the conditional probabilities, which are equal when the variables are exchangeable:
\begin{eqnarray}\label{eqTailDep}
\eta_l&=& \lim_{y_q \rightarrow -\infty} \frac{2\int_{-\infty}^{y_q} f_{\bm{Y}}(y_1,y_q)dy_1}{f_{Y_2}(y_q)}=\lim_{y_q \rightarrow -\infty} \frac{2\int_{-\infty}^{y_q} f_{P_1}\left(\frac{y_q+y_1}{\sqrt{2}}\right)f_{P_2}\left(\frac{y_1-y_q}{\sqrt{2}}\right)dy_1}{\int_{-\infty}^{\infty} \sqrt{2}f_{P_1}(p_1)f_{P_2}\left(p_1-\sqrt{2}y_q \right)dp_1}\nonumber \\[0.5ex]
&=& \lim_{y_q \rightarrow -\infty} \frac{2\int_{-\infty}^{y_q} e^{(\alpha_1 + \beta_1) (y_1+y_q)/\sqrt{2}} e^{-(y_1 - y_q)^2/(4\sigma_2^2)} dy_1}{\sqrt{2}\int_{-\infty}^{\infty}e^{(\alpha_1 + \beta_1) p_1} e^{-(p_1 -\sqrt{2} y_q)^2/(2\sigma_2^2)} dp_1}= \lim_{y_q \rightarrow -\infty} \frac{2 I_1}{\sqrt{2}I_2}  
\end{eqnarray}
In the third step, the leading contribution of the integrals arises from the regions centered around $y_q$ and $\sqrt{2} y_q$, respectively. The two integral expressions ($I_1$ and $I_2$) are standard integrals, for which we obtain:
\begin{eqnarray}\label{eqIntegrals}
I_1 &=&  2\sqrt{\pi}\sigma_2 e^{\sqrt{2}(\alpha + \beta)y_q+(\alpha + \beta)^2\sigma_2^2/2}\Phi(-(\alpha + \beta)\sigma_2) \nonumber\\
I_2 &=&\sqrt{2} \sqrt{\pi}\sigma_2 e^{\sqrt{2}(\alpha + \beta)y_q+(\alpha + \beta)^2\sigma_2^2/2}
\end{eqnarray}
We finalize by inserting the expressions from Eq. (\ref{eqIntegrals}) into Eq. (\ref{eqTailDep}).
\end{proof}

The tail parameter $\eta_l$ is seen to approach 1, when $(\alpha_1 + \beta_1)$ goes to zero, leading to a fatter lower tail for $P_1$. Similarly, we can show that the upper tail dependence parameter is given by $\eta_u=2\Phi(-(\alpha_1 - \beta_1)\sigma_2)$. The copula allows for different upper and lower tail dependence. The first common factor that drives the dependence of the joint downward shocks is skewed and leptokurtic, while the higher component is based on the normal distribution. Therefore, the additional complexity in the copula is added where it is most relevant. We show in applications that this copula is readily generalized to high dimensions and that it performs well for aggregate measures based on historic market data.

\begin{rem}  
For an elliptical distribution, tail dependence is generated when the random vector is regularly varying \citep{Hult2002}. For a PCC, we have shown explicitly that we can also generate tail dependence with exponentially and Gaussian tailed variables, since there is no elliptical symmetry assumed.
\end{rem}

\subsubsection{Skew $t_1$-$t_{d-1}$ PCC}

More general PCC constructions are obtained with the generalized hyperbolic distribution (GH) as generator. The GH distribution has a stochastic representation as a normal mean-variance mixture with the generalized inverse Gaussian (GIG) distribution as the stochastic mixing variable \citep{McNeil2005}. The corresponding class of implicit copulas is called a normal mixture copula. We will consider PCCs with multiple stochastic mixing factors, which will result in convolutions of the GH distribution. We start with the density for the $d$-dimensional GH distribution, for which various parameterizations exist \citep{Prause1999}. We use:
\begin{equation}
f_{\rm GH}(\bm{x}) =\frac{(\psi/\chi)^{\lambda/2}\alpha^{d/2-\lambda}e^{(\bm{x}-\bm{\mu})\bm{\beta}}}{(2\pi)^{d/2}\vert\bm{\Sigma}\vert^{1/2}K_{\lambda}(\sqrt{\chi\psi})}\frac{K_{\lambda-d/2} \left(\alpha \sqrt{\chi+(\bm{x}-\bm{\mu})'\bm{\Sigma}^{-1}(\bm{x}-\bm{\mu})}\right)}{\left(\sqrt{\chi+(\bm{x}-\bm{\mu})'\bm{\Sigma}^{-1}(\bm{x}-\bm{\mu})}\right)^{d/2-\lambda}} \nonumber
\end{equation}
Here, we have introduced the notations $\bm{\beta}=\bm{\Sigma}^{-1}\bm{\gamma}$ and $\alpha = \sqrt{\psi+\bm{\gamma}'\bm{\Sigma}^{-1}\bm{\gamma}}$, while $\lambda \in \mathbb{R}$, $\chi > 0$, $\psi > 0$, $\bm{\mu} \in \mathbb{R}^d$, $\bm{\gamma} \in \mathbb{R}^d$ and $\bm{\Sigma} \in \mathbb{R}^{d\times d}$ a positive definite matrix. Importantly, also the characteristic function for the GH distribution is analytically known \citep{Prause1999}, which is given by:
\begin{equation}\label{eqCfGh}
\phi_{\rm GH}(\bm{t})= \frac{e^{i\bm{t}'\bm{\mu}}\psi^{\lambda/2}}{(\psi+\bm{t}'\bm{\Sigma} \bm{t}-2i\bm{t}'\bm{\gamma})^{\lambda/2}} \frac{K_{\lambda} \left(\sqrt{\chi(\psi+\bm{t}'\bm{\Sigma}\bm{t}-2i\bm{t}'\bm{\gamma})}\right) }{K_{\lambda}(\sqrt{\chi\psi})}.
\end{equation}
Many relevant distributions in finance are part of the GH family, since $\lambda=-1/2$ results in the normal-inverse Gaussian (NIG) distribution, while  $\lambda=(d+1)/2$ results in the hyperbolic distribution. When $\lambda=-\nu/2$, $\chi=\nu$ and $\psi\rightarrow 0$, the multivariate skew $t$ distribution is obtained. The corresponding implicit copula is known as the GH skew $t$ copula, which is frequently used in financial applications \citep{Demarta2005,Lucas2014,Opschoor2020,Oh2023}.  When the skewness parameter vector $\bm{\gamma}$ goes to zero, the multivariate $t$ distribution and the $t$ copula with degrees-of-freedom (DoF) parameter $\nu$ are retrieved \citep{Demarta2005}. 

\begin{figure}
\centering
\includegraphics[width=0.75\textwidth]{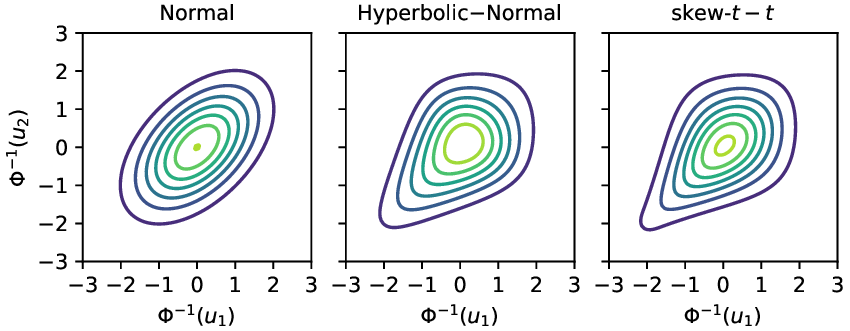}
\caption{Contour plots for bivariate copula densities generated by the Hyperbolic-Normal and the Skew $t_1$-$t_1$ Principal Component Copulas. For the visualization, we choose for the standard normal distribution as marginal distribution, so that the Gaussian copula results in elliptical contours. Because the PCC examples have a skewed and leptokurtic distribution along the first Principal Component, enhanced probabilities of joint downward movements are generated.}\label{figPccDens}
\end{figure}

Next, we explain a specific PCC construction based on GH distributions. An important feature of the copula construction is that the stochastic mixing variable $V$ for the first generator $P_1$ differs from the stochastic mixing variable $W$ for the higher generators $P_j$. We focus on the subfamily of the (skew) $t$ distribution, for which the stochastic mixing variable simplifies to the inverse gamma (IG) distribution \citep{Demarta2005}. The PCC generators therefore have the following stochastic representation:
\begin{equation}\label{eqNmvmix}
P_j=\begin{cases}
\mu_1 + \gamma_1 V +\sigma_{1} \sqrt{V} Z_1, & \text{ for } j=1\\
\sigma_{j}\sqrt{W} Z_j, & \text{ for } j > 1
\end{cases}
\end{equation}
Here,  $Z_j \sim N(0,1)$, $V\sim {\rm IG}(\nu_1/2,\nu_1/2)$ and $W\sim{\rm IG}(\nu_{>}/2,\nu_{>}/2)$ are independent stochastic variables, while $\sigma_{j} = \sqrt{\Sigma_{j,j}}$, $\nu_1$ is the DoF parameter for the first generator and $\nu_{>}$ the parameter for the higher generators. As a result, the generators $P_j$ are uncorrelated, but the higher PCs with $j>1$ are dependent due to the common mixing variable $W$. Since the expectation value and the variance of the GH distribution have analytic expressions \citep{Prause1999}, we can determine $\mu_1$ and $\Sigma_{j,j}$ such that for all $j$ the mean of $P_j$ is zero and the variance of $P_j$ equals $\Lambda_{j}$. For the GH skew $t$ distribution, these analytic expressions are particularly simple \citep{Demarta2005}, so that we obtain:
\begin{equation}
\mu_1 = -\frac{\nu_1\gamma_1}{(\nu_1-2)}, \quad \Sigma_{1,1} = \frac{(\nu_1-2)}{\nu_1}\Lambda_{1}-\frac{2\nu_1}{(\nu_1-2)(\nu_1-4)} \gamma_1^2. 
\end{equation}
Similarly, we have $\Sigma_{j,j} = (\nu_{>}-2)\Lambda_{j}/\nu_{>}$ for the higher generators with $j>1$. We require $\nu_1>4$ and $\nu_{>}>2$ for the variance $\Lambda_j$ to remain finite. This is because the heaviest tail of the skew $t$ distribution decays as $\vert x \vert^{-\nu/2-1}$ \citep{Aas2006}, while the $t$ distribution decays as  $\vert x \vert^{-\nu-1}$. 

The first copula generator, $P_1$, is used to generate asymmetric tail dependence. The higher PCs, $\bm{P}_{>}$, do not incorporate skewness. The Skew $t_{1}$-$t_{d-1}$ PCC is then based on the following specification: 
\begin{equation} \label{eqExPccSkewMvt}
P_1 \sim  {\rm Skew } \, t_1(\nu_1,\Lambda_{1},\gamma_1), \quad  \bm{P}_{>} \sim t_{d-1}(\nu_{>},\bm{\Lambda}_{>}). 
\end{equation}
Here, Skew $t_1(\nu_1,\Lambda_{1},\gamma_1)$ denotes the one-dimensional GH skew $t$ distribution with zero mean, variance $\Lambda_{1}$, degrees-of-freedom parameter $\nu_1$ and skewness parameter $\gamma_1$, while $t_{d-1}(\nu_{>},\bm{\Lambda}_{>})$ denotes the ($d-1$)-dimensional $t$ distribution with diagonal covariance matrix $\bm{\Lambda}_{>}$. The matrix $\bm{\Lambda}_{>}$ is obtained from $\bm{\Lambda}$ by omitting the first row and column containing $\Lambda_1$. The DGP for $\bm{Y}$ follows from the linear transformation $\bm{Y}=\bm{W}\bm{P}$. A contour plot of the Skew $t_1$-$t_{d-1}$ copula density with Gaussian marginal distributions is given in Figure \ref{figPccDens}. The pairwise tail dependence parameters, that follow from the convolution of two independent factors with regularly varying tails, have been analytically obtained by several authors \citep{Bouchaud2003, Malevergne2004, Oh2017}. Their results show that nontrivial tail dependence structures are obtained from the convolution of heavy-tailed factors with a common tail index.  

We conclude that the proposed implicit copula has the following combination of favorable properties specific to PCCs. It is skewed along the first generator, which is the main common factor. The distribution is elliptical in the subspace of the higher generators, which are the orthogonal directions to the main factor resulting in diversification. The stochastic structure is based on two stochastic mixing variables. This differs from the commonly applied (skew) $t$ copula, which has a single stochastic mixing factor $W$. In that case, the stochastic variance factor $W$ grows for all directions simultaneously in extreme scenarios. The variation orthogonal to the parallel movement is not a priori expected to grow during a crash, because it enhances diversification. Consequently, we find in our empirical study of Section \ref{secApplications} that the Skew $t_1$-$t_{d-1}$ PCC specification leads to improvements in terms of information criteria and in capturing systemic risk compared to other copulas. 

\subsection{Copula density}\label{subsecIndPcc}

Having discussed specific examples, we return to the general concept behind the copula construction technique, which is to generate novel dependence structures based on convolutions of analytically tractable probability distributions. In Fig. \ref{figPccTrans}, we illustrate the Data Generating Process of the Principal Component Copula. Characteristic functions are convenient in obtaining analytic results for convolutions of stochastic variables. We summarize the most relevant properties of characteristic functions that we require in the following lemma. For clarity, we distinguish three different cases, where the first two situations are special versions of the more general third case. In the first case, all PCs $P_j$ are independent, so that all (univariate) characteristic functions $\phi_{P_j}$ multiply. In the second case and third case, we have that subgroups of generators ${\bf P}_g$ are independent, so that the corresponding (multivariate) characteristic functions $\phi_{{\bm P}_g}$ multiply. The PCs within such a subgroup are then not necessarily independent.

\begin{figure}
\centering
\includegraphics[width=0.75\textwidth]{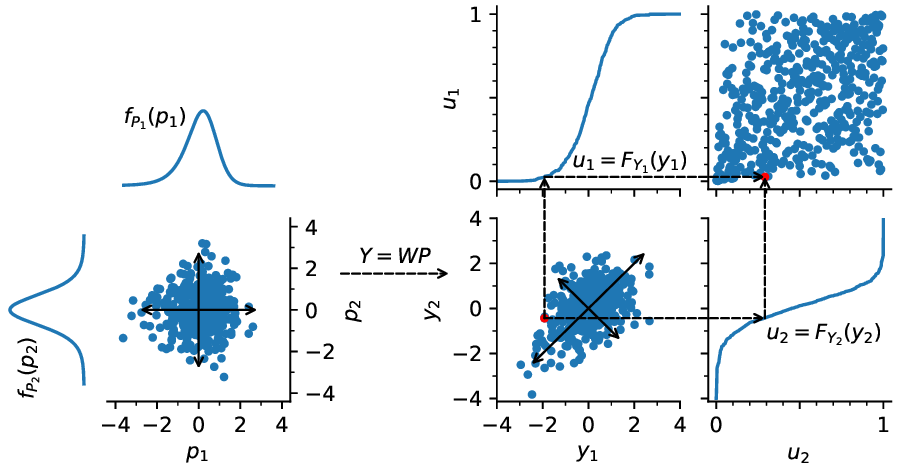}
\caption{Visualization of the data generating process for the Principal Component Copula (PCC). The joint distribution of uncorrelated generators $P_j$ forms the basis of the implicit copula. The orthogonal matrix $\bm{W}$ performs a rotation to introduce correlations between stochastic variables $Y_i$. Each marginal distribution $F_{Y_i}$ can be computed using convolution to complete the copula construction for $U_i$. Tail dependence is generated in a flexible and controlled way by distribution choices in the parallel and orthogonal directions.} \label{figPccTrans}
\end{figure}

\begin{lem} \label{lemChar}
Consider two $d$-dimensional stochastic vectors $\bm{Y}$ and $\bm{P}$ that are related to each other through the linear transformation $\bm{Y}=\bm{WP}$ with $\bm{W}$ a $d$-dimensional orthogonal matrix. Next, consider three cases for the copula generators ${\bm P}$ with increasing generality: 1) all generators $P_j$ are independent, 2) a single generator $P_1$ is independent, or 3) $m$ subgroups ${\bf P}_g$ of generators with $\bm{P}=(\bm{P}'_1,..,\bm{P}'_m)'$ are independent (case 3). \\ 
{\bf Case 1}. Let the characteristic function of each individual independent copula generator ${P}_j$ be known and given by $\phi_{P_j}(t_j)$. Then, we obtain for $\phi_{\bm Y}$ that: 
\begin{equation}
\phi_{\bm Y}(\bm t)= E[e^{i\bm{t}'\bm{Y}}]=E[e^{i\bm{t}'\bm{WP}}]=\prod_{j=1}^d E[e^{i (\bm{t}' \bm{w}_{j})P_j}]=\prod_{j=1}^d \phi_{P_j}(\bm{w}_{j}'\bm{t}).
\end{equation}
Here, $\bm{w}_{j}$ is the $j$-th column of $\bm{W}$.\\
{\bf Case 2}. Let the characteristic function of the first copula generator $P_1$ and the higher copula generators ${\bf P}_{>}$ be known and given by $\phi_{P_1}(t_1)$ and $\phi_{\bm{P}_{>}}(t_2,...,t_d)$. Then, we obtain for $\phi_{\bm Y}$ that: 
\begin{equation}
\phi_{\bm Y}(\bm t)= E[e^{i\bm{t}'\bm{WP}}]= E[e^{i(\bm{t}'\bm{w_1})P_1}]E[^{i\bm{t}'\bm{W}_{>}\bm{P}_{>}}]= \phi_{P_1}(\bm{w}'_{1} \bm{t})  \phi_{\bm{P}_{>}}(\bm{W}'_{>} \bm{t}).\nonumber
\end{equation}
Here, $\bm{w}_{1}$ is the first column of $\bm{W}$, while $\bm{W}_{>}$ is the $d \times (d-1)$ matrix by omitting $\bm{w}_{1}$ from $\bm{W}$.\\
{\bf Case 3}. Let the characteristic function of each $d_g$-dimensional generator ${\bf P}_{g}$ be known and given by $\phi_{\bm{P}_{g}}(\bm{t}_{g})$ with $\sum_{g=1}^m d_g =d$. Then, we obtain for $\phi_{\bm Y}$ that:  
\begin{equation} \label{eqCharY}
\phi_{\bm Y}(\bm t)= E[e^{i\bm{t}'\bm{WP}}]=\prod_{g=1}^m E[e^{i \bm{t}' \bm{W}_{g}\bm{P}_g}]=\prod_{g=1}^m\phi_{{\bm P}_g}(\bm{W}'_{g} \bm{t}) .
\end{equation}
Here, $\bm{W}_{g}$ is the $d \times d_g$ matrix by selecting the columns $j$ from $\bm{W}$ belonging to $P_j$ in subgroup $g$.\\
The marginal characteristic function $\phi_{Y_1}(t_1)$ is given by $\phi_{\bm Y}(t_1,0,...,0)$. In general, we obtain the marginal characteristic function $\phi_{Y_i}(t_i)$ for $Y_i$ from $\phi_{\bm Y}({\bm t})$ by setting all $t_j$ to zero for $j\neq i$, which we write as: 
\begin{equation} \label{eqCharYi}
\phi_{Y_i}(t_i)= \phi_{\bm Y}(t_i\bm{e}_i).
\end{equation}
Here, $\bm{e}_i$ denotes the $i$-th unit vector of dimension $d$.
\end{lem}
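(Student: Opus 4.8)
The plan is to prove the most general statement, Case 3, first and then recover Cases 1 and 2 as specializations, since a single independent generator (Case 2) and $d$ singleton generators (Case 1) are just particular groupings of the stacked vector $\bm P$. The whole argument rests on two elementary facts: how a characteristic function transforms under a linear map, and the factorisation of a joint characteristic function across independent blocks.

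First I would record the linear-transformation identity. Writing $\phi_{\bm Y}(\bm t)=E[e^{i\bm{t}'\bm{Y}}]$ and substituting $\bm{Y}=\bm{WP}$ gives $\phi_{\bm Y}(\bm t)=E[e^{i\bm{t}'\bm{WP}}]=E[e^{i(\bm{W}'\bm{t})'\bm{P}}]=\phi_{\bm P}(\bm{W}'\bm{t})$, using $\bm{t}'\bm{W}=(\bm{W}'\bm{t})'$. It is worth noting that this step needs only $\bm{Y}=\bm{WP}$; orthogonality of $\bm{W}$ plays no role here, and is used elsewhere in the construction (e.g.\ to write $\bm{P}=\bm{W}'\bm{Y}$ and to identify $\bm{\Lambda}$ with the covariance of $\bm{P}$).

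Next I would split $\bm{WP}$ along the subgroups. Partitioning the columns of $\bm{W}$ as $\bm{W}=[\bm{W}_1,\dots,\bm{W}_m]$ to match the stacking $\bm{P}=(\bm{P}'_1,\dots,\bm{P}'_m)'$, block matrix--vector multiplication gives $\bm{W}\bm{P}=\sum_{g=1}^m \bm{W}_g\bm{P}_g$, hence $\bm{t}'\bm{W}\bm{P}=\sum_{g=1}^m (\bm{W}'_g\bm{t})'\bm{P}_g$ and $e^{i\bm{t}'\bm{W}\bm{P}}=\prod_{g=1}^m e^{i(\bm{W}'_g\bm{t})'\bm{P}_g}$. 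Because the subgroups $\bm{P}_g$ are independent, the expectation of the product equals the product of the expectations, which yields $\phi_{\bm Y}(\bm t)=\prod_{g=1}^m E[e^{i(\bm{W}'_g\bm{t})'\bm{P}_g}]=\prod_{g=1}^m \phi_{\bm{P}_g}(\bm{W}'_g\bm{t})$, that is, Eq.~(\ref{eqCharY}). Case 1 then follows by taking every subgroup to be a single generator ($d_g=1$, $\bm{W}_g=\bm{w}_j$, $\phi_{\bm{P}_g}=\phi_{P_j}$), and Case 2 by taking the two subgroups $\{P_1\}$ and $\bm{P}_{>}$. The marginal formula (\ref{eqCharYi}) is immediate: evaluating at $\bm{t}=t_i\bm{e}_i$ gives $\phi_{\bm Y}(t_i\bm{e}_i)=E[e^{it_i Y_i}]=\phi_{Y_i}(t_i)$.

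The argument is elementary, so there is no real ``hard part'' in the sense of a deep estimate; the only genuine bookkeeping I would treat carefully is the block decomposition $\bm{t}'\bm{W}\bm{P}=\sum_g (\bm{W}'_g\bm{t})'\bm{P}_g$, making sure the column partition of $\bm{W}$ is aligned with the stacking of $\bm{P}$, together with the appeal to independence that converts the expectation of a product into a product of expectations. Everything else is direct substitution into the definition of the characteristic function.
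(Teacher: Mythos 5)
Your proof is correct and uses exactly the two ingredients the paper relies on — the factorization of the characteristic function across independent blocks and the substitution $\bm{t}=t_i\bm{e}_i$ for the marginals — with the only difference being that you prove Case 3 first and specialize, whereas the paper writes out the three cases in increasing generality and then notes that Case 3 encompasses the others. This is the same argument in a slightly different order, and your explicit remark that orthogonality of $\bm{W}$ is not needed for this step is a correct and worthwhile observation.
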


This lemma therefore makes use of two convenient properties of characteristic functions, namely that the characteristic functions of (groups of) independent variables multiply, and that the marginal characteristic function is obtained from the multivariate characteristic function by setting all other arguments to zero. 
The first case of independent generators is relevant for the hyperbolic-normal PCC, because for the multivariate normal distribution it is equivalent to be uncorrelated and independent. The second case is relevant for the Skew $t_1$-$t_{d-1}$ PCC, because then all higher PCs are dependent. The third case allows general splitting of the PCs in subgroups and encompasses the first two cases. As a result, the multivariate distribution properties of the implicit copula for ${\bm Y}$ become accessible from the properties of the generator distributions. We can use the analytic results for the characteristic function to derive the marginal density and the marginal distribution function for $Y_i$. These expressions are needed to determine the convolution copula density from Eq. (\ref{eqCopDens}) and for simulating from the copula. This is summarized in the following lemma that follows from the properties of the characteristic function and the Gil-Pelaez theorem:
\begin{lem} \label{lemDist}
The marginal probability density $f_{Y_i}$ follows from the marginal characteristic function $\phi_{Y_i}(t)$ using:
\begin{equation} \label{eqCharPdf}
f_{Y_i}(y_i)= \frac{1}{2\pi}\int_{-\infty}^{\infty} \phi_{Y_i}(t)e^{-ity_i} dt.
\end{equation}
The marginal distribution function $F_{Y_i}$ follows from the Gil-Pelaez theorem and is given by:
\begin{equation}\label{eqCharCdf}
F_{Y_i}(y_i)= \frac{1}{2}-\frac{1}{\pi}\int_0^{\infty} \frac{{\rm Im}[e^{-ity_i}\phi_{Y_i}(t)]}{t} dt. 
\end{equation}
\end{lem}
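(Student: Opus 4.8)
The plan is to treat the two claims separately, recognizing that each is the specialization of a classical theorem to the marginal characteristic function $\phi_{Y_i}$ that Lemma~\ref{lemChar} already makes available in closed form.

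For the density formula \eqref{eqCharPdf}, I would invoke the Fourier inversion theorem. By definition $\phi_{Y_i}(t)=E[e^{itY_i}]=\int_{-\infty}^\infty e^{ity}f_{Y_i}(y)\,dy$, so $\phi_{Y_i}$ is, up to the sign convention, the Fourier transform of $f_{Y_i}$. Provided $\phi_{Y_i}\in L^1(\mathbb{R})$, the inversion theorem returns the density exactly as stated, and moreover guarantees that $f_{Y_i}$ is continuous and bounded. The only genuine condition to verify is integrability of $\phi_{Y_i}$; for the normal and generalized-hyperbolic generators of our examples this holds, since their characteristic functions decay either through a Gaussian factor or through the decay of the modified Bessel function in \eqref{eqCfGh}, so the inversion integral converges absolutely.

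For the distribution-function formula \eqref{eqCharCdf}, I would reproduce the standard proof of the Gil--Pelaez theorem. The key ingredient is the Dirichlet representation of the sign function, which yields $\tfrac{1}{2}-\tfrac{1}{\pi}\int_0^\infty \frac{\sin(tx)}{t}\,dt=\mathbf{1}_{\{x<0\}}$; I would apply this with $x=Y_i-y_i$. Taking expectations gives $F_{Y_i}(y_i)=\tfrac{1}{2}-\tfrac{1}{\pi}\int_0^\infty \frac{E[\sin(t(Y_i-y_i))]}{t}\,dt$, and since $E[\sin(t(Y_i-y_i))]=\operatorname{Im}E[e^{it(Y_i-y_i)}]=\operatorname{Im}[e^{-it y_i}\phi_{Y_i}(t)]$, the claimed expression follows immediately.

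The main obstacle is the interchange of expectation and integration in the Gil--Pelaez step. Because $\sin(t(Y_i-y_i))/t$ is only conditionally, not absolutely, integrable on $(0,\infty)$, Fubini does not apply directly. I would handle this in the usual way by first truncating to $\int_0^T$, where absolute integrability holds and the exchange is justified, and then passing to the limit $T\to\infty$; the interior integral $\int_0^T \frac{\sin(tx)}{t}\,dt$ is uniformly bounded in $T$ and $x$ and converges pointwise to $\tfrac{\pi}{2}\operatorname{sgn}(x)$, so dominated convergence legitimizes the limit. This truncation argument is the single technical point; everything else reduces to the two cited classical theorems combined with the availability of $\phi_{Y_i}$ from Lemma~\ref{lemChar}.
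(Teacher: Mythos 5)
Your proposal is correct and follows essentially the same route as the paper, which simply invokes the Fourier inversion theorem and the Gil--Pelaez theorem without writing out a proof; you reproduce the standard proofs of those two classical results (including the $L^1$ condition for inversion and the truncation argument justifying the interchange in Gil--Pelaez), which is exactly the content the paper's citation delegates to the literature.
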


Next, we use Lemma \ref{lemChar} and \ref{lemDist} to prove the following proposition for the copula density of a PCC. We consider the most general third case of Lemma \ref{lemChar}, from which the other two special cases follow.

\begin{prop}
Consider a $d$-dimensional Principal Component Copula $C_{\bm{Y}}$ for the stochastic vector $\bm{Y}$ that is generated by the (uncorrelated) stochastic vector $\bm{P}$ using  $\bm{Y}=\bm{WP}$ with $\bm{W}$ an orthogonal transformation matrix. Let $\bm{P}=(\bm{P}'_1,...,\bm{P}'_m)'$ be split in independent $d_g$-dimensional stochastic vectors $\bm{P}_g$ with known probability densities $f_{\bm{P}_g}$ and characteristic functions $\phi_{\bm{P}_g}$. Then, the $d$-dimensional copula density can be expressed in terms of $f_{\bm{P}_g}$ and $\phi_{\bm{P}_g}$ using only one-dimensional integrals and one-dimensional inversions.
\end{prop}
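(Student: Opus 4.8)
The plan is to read off the three ingredients appearing in the copula density formula (\ref{eqCopDens}) — the joint density $f_{\bm{Y}}$ in the numerator, the marginal densities $f_{Y_i}$ in the denominator, and the inverse marginal distribution functions $F_{Y_i}^{-1}$ supplying the arguments — and to show that, with the subgroup densities $f_{\bm{P}_g}$ and characteristic functions $\phi_{\bm{P}_g}$ in hand, each ingredient is obtained either in closed form or through a single univariate integral or inversion. Assembling the three pieces into (\ref{eqCopDens}) then yields the stated representation.

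First I would treat the numerator. Because $\bm{W}$ is orthogonal we have $\bm{P}=\bm{W}'\bm{Y}$ with $\lvert\det\bm{W}\rvert=1$, so the change-of-variables formula gives $f_{\bm{Y}}(\bm{y})=f_{\bm{P}}(\bm{W}'\bm{y})$ with unit Jacobian. Independence of the subgroups factorizes $f_{\bm{P}}(\bm{p})=\prod_{g=1}^m f_{\bm{P}_g}(\bm{p}_g)$, and identifying $\bm{p}_g=\bm{W}_g'\bm{y}$ with the components of $\bm{P}$ in subgroup $g$ produces the closed form $f_{\bm{Y}}(\bm{y})=\prod_{g=1}^m f_{\bm{P}_g}(\bm{W}_g'\bm{y})$. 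Since each $f_{\bm{P}_g}$ is assumed known, the numerator requires no integration at all: the full dimensionality is absorbed into the given generator densities.

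Next I would build the denominator and the arguments from the marginal characteristic function. By Eq. (\ref{eqCharYi}) we have $\phi_{Y_i}(t)=\phi_{\bm{Y}}(t\bm{e}_i)$, and substituting $\bm{t}=t\bm{e}_i$ into the factorized form of Lemma \ref{lemChar} (Case 3) gives $\phi_{Y_i}(t)=\prod_{g=1}^m \phi_{\bm{P}_g}(t\,\bm{W}_g'\bm{e}_i)$, where each factor depends on $t$ only through a scalar rescaling of the fixed direction $\bm{W}_g'\bm{e}_i$; hence $\phi_{Y_i}$ is genuinely a univariate function of $t$. Feeding it into the Fourier inversion (\ref{eqCharPdf}) of Lemma \ref{lemDist} yields $f_{Y_i}$ as one one-dimensional integral, supplying the denominator. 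For the arguments I would first obtain $F_{Y_i}$ from the same univariate $\phi_{Y_i}$ via the one-dimensional Gil-Pelaez integral (\ref{eqCharCdf}), and then recover $F_{Y_i}^{-1}(u_i)$ by solving $F_{Y_i}(y_i)=u_i$ through one-dimensional root-finding, which is the one-dimensional inversion referred to in the statement.

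The step I expect to be the crux — and the one worth making explicit — is the collapse of the intrinsically $d$-dimensional $\phi_{\bm{Y}}$ to the univariate $\phi_{Y_i}$: the projection $\bm{t}=t\bm{e}_i$ reduces every factor $\phi_{\bm{P}_g}(\bm{W}_g'\bm{t})$ to $\phi_{\bm{P}_g}(t\,\bm{W}_g'\bm{e}_i)$, a scaling along a single fixed vector, and it is precisely this reduction that renders the Fourier and Gil-Pelaez inversions one-dimensional. The accompanying point to guard is that the numerator escapes multivariate integration only by virtue of the assumed closed-form availability of the $f_{\bm{P}_g}$; once that is granted, every integral or inversion entering (\ref{eqCopDens}) is univariate, which is the assertion to be proved.
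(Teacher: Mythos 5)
Your proposal is correct and follows essentially the same route as the paper: the numerator via the orthogonal change of variables with unit Jacobian and subgroup independence, $f_{\bm{Y}}(\bm{y})=\prod_{g}f_{\bm{P}_g}(\bm{W}_g'\bm{y})$, and the denominator and arguments via the univariate marginal characteristic function $\phi_{Y_i}(t)=\prod_g\phi_{\bm{P}_g}(t\,\bm{W}_g'\bm{e}_i)$ fed into the one-dimensional Fourier and Gil-Pelaez inversions followed by a one-dimensional root-finding for $F_{Y_i}^{-1}$. Your explicit remark that the projection $\bm{t}=t\bm{e}_i$ is what collapses the problem to univariate integrals is a helpful emphasis, but it is the same mechanism the paper uses through Lemmas \ref{lemChar} and \ref{lemDist}.
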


\begin{proof}
We start from Eq. (\ref{eqCopDens}) for the continuous probability density of an implicit copula. We consider the denominator in Eq. (\ref{eqCopDens}), for which we require the expressions $F_{Y_i}^{-1}$ and $f_{Y_i}$. Combining the results from Eqs. (\ref{eqCharY}) and (\ref{eqCharYi}) with Eqs. (\ref{eqCharPdf}) and (\ref{eqCharCdf}), we obtain $F_{Y_i}$ and $f_{Y_i}$ as one-dimensional integrals over characteristic functions $\phi_{\bm{P}_g}$. Consequently, $F_{Y_i}^{-1}$ can be obtained by one-dimensional inversion. Next, we proceed with the expression for the numerator. We have that $\bm{P}=\bm{W}'\bm{Y}$ and that $\bm{P}_g=\bm{W}_g'\bm{Y}$ with $\bm{W}_{g}$ the $d \times d_g$ matrix containing the columns $j$ from $\bm{W}$ belonging to $P_j$ in subgroup $g$. As a result, we have:  
\begin{equation}\label{eqDensY}
f_{\bm{Y}}(\bm{y})= f_{\bm{P}}(\bm{W}'\bm{y})\lvert J \rvert 
=\prod_{g=1}^m f_{\bm{P}_g}(\bm{W}_g'\bm{y}),
\end{equation}
where $f_{\bm{P}_g}$ denotes the probability density of $\bm{P}_g$. In the second step, we used the independence of the PC groups and that the Jacobian determinant $J$ equals $\pm 1$ for an orthogonal transformation. Since we have expressed $f_{Y_i}$, $F_{Y_i}$ and $f_{\bm{Y}}$ in terms of $f_{P_g}$ and $\phi_{P_g}$ using only one-dimensional integrals, and since $F^{-1}_{Y_i}$ is obtained by one-dimensional inversion, we have achieved the desired result for the copula density from Eq. (\ref{eqCopDens}).
\end{proof}

The integrals in Eqs. (\ref{eqCharPdf}) and (\ref{eqCharCdf}) can be efficiently evaluated numerically using the Fast Fourier Transform \citep{Carr1999}. Alternatively, the density and distribution function can be accurately approximated with the COS-method, which is based on cosine expansions \citep{Fang2009}. Since we only need to evaluate one-dimensional integrals to determine the density from Eq. (\ref{eqCopDens}) for an PCC, it is feasible to compute the full copula density in high dimensions. This is useful for example to perform maximum likelihood estimation (MLE). In Section \ref{secApplications}, we give numerical examples. 

\begin{rem} 
For a factor copula based on multiple factors, we have that Eq. (\ref{eqDensY}) does not apply, because it is based on a latent factor structure \citep{Oh2017}, rather than an orthogonal transformation. As a result, determining $f_{\mathbf{Y}}$ and $c(u_1,...,u_d)$ requires integrating over all common factors in that case. 
\end{rem}

We can specify the PCC ($C_{\bm{\Lambda}, \mathbf{W}, \bm{\alpha}}$) by the eigenvalues $\Lambda_j$ which describe the variance of the PCs, by the orthogonal matrix elements $w_{i,j}$ describing the linear transformation from $P_j$ to $Y_i$, and by the shape parameters $\bm{\alpha}$, which allow $P_j$ to be skewed or heavy-tailed. By definition of the PCC, we require the parameters $\Lambda_j$ and $w_{i,j}$ to form a valid correlation matrix, since they satisfy $\bm{\rho}_{\bm{Y}}=\bm{W} \bm{\Lambda} \bm{W}'$. This leads to restrictions on admissible parameter values for $\Lambda_j$ and $w_{i,j}$, which can be cumbersome to keep track of while performing estimation in high dimensions. As a result, we prefer to parametrize the PCC by the correlation matrix $\bm{\rho}_\mathbf{Y}$ and the shape parameters $\bm{\alpha}$. In high-dimensional applications, we prefer to estimate $\bm{\rho}_\mathbf{Y}$ with a moment estimator, such that it is automatically a valid correlation matrix, after which we perform the decomposition in eigenvalues and eigenvectors. If the eigenvalues $\Lambda_j$ are non-degenerate, the eigenvalue decomposition of $\bm{\rho}_\mathbf{Y}$ is unique up to the sign for the eigenvectors. In practice, it is common to enforce a sign convention, for example, that the largest eigenvector weight is taken to be positive. 

We note that the PCC is not equivalent to merely specifying a copula between PCs, since this is not sufficient to fully specify the copula for the stochastic variables $Y_i$. For example, an independence copula between PCs can lead to strong tail dependence between risk factors, when the PCs have fatter tails than Gaussian (see Proposition \ref{propTail}). The PCC fully specifies the implied copula density for the stochastic variables $Y_i$ that is generated by the PCs $P_j$. To the best of our knowledge, the implicit copula that is generated by PCA has not been systematically studied before.

\section{Estimation procedures} \label{secEst}

Having introduced the PCC and considered several properties, we next discuss the estimation procedures. In case of financial returns, it is common to first apply a time series filter (e.g. GARCH) to arrive at devolatized returns, after which the dependence structure is studied. In Section \ref{subsubsecTimeSeries}, we give a concrete example for such a time series model.  In this section, we assume that a $d$-dimensional data sample is observed that consists of $n$ iid observations $\bm{X}_{t}$ with $t=1,...,n$.

\subsection{Data exploration and model initiation} 

\begin{figure}
\centering
\includegraphics[width=0.95\textwidth]{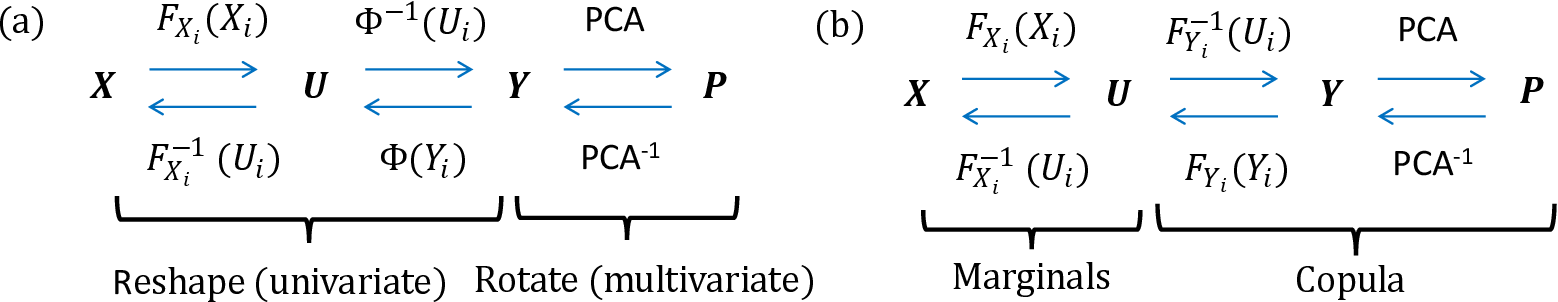}
\caption{Modelling scheme for: (a) Principal Component Analysis based on normal scores, and for: (b) a general Principal Component Copula.}\label{figSchemes}
\end{figure}

We start with explaining a procedure to explore the data sample, which is also convenient to initialize estimation procedures. We follow the scheme of Fig. \ref{figSchemes}a, which amounts to performing PCA on normal scores. The advantage of this scheme is that the first three steps can be performed directly on the original data $\bm{X}_{t}$ without requiring knowledge about the parameters of the implicit copula. Since we are only interested in the copula for $\bm{X}_{t}$, we transform the original data $X_{i,t}$ in the first step of Fig. \ref{figSchemes}a to (pseudo) copula observations using $U_{i,t}= {\rm rank}(X_{i,t})/(n+1)$. The second step of Figure \ref{figSchemes}a is also readily performed by obtaining the normal scores $Y_{i,t}=\Phi^{-1}(U_{i,t})$. As a result, the data is symmetrized and outliers are reduced before applying PCA. The third step is to apply PCA by determining the eigenvalues and eigenvectors of the sample correlation matrix of $\bm{Y}$ and decompose this matrix in the diagonal eigenvalue matrix $\bm{\Lambda}$ and the orthogonal matrix $\bm{W}$. Using $\bm{P}_t = \bm{W}'\bm{Y}_t $ we obtain an initial approximation of the PC observations $P_{j,t}$ based on matrix multiplication of normal scores, which is easily performed.  

We have now rescaled and rotated the data, while automatically detecting the most important directions based on normal scores. If the implicit copula is Gaussian, it is valid to use $F_{Y_i}=\Phi$ as marginal distribution function. However, this does not hold in general. For the $t$ copula, the appropriate marginal distribution would be the univariate $t$ distribution function. In general, we require the use of $F_{Y_i}$ based on Eqs.~(\ref{eqCharYi}) and (\ref{eqCharCdf}) in the second step of Fig.~\ref{figSchemes}b to obtain a consistent implicit copula representation generated by stochastic PCs. For data exploration and model initiation, we still find it relevant to visualize the properties of $\Phi^{-1}(U_{i,t})$ with Fig.~\ref{figSchemes}a, also in the non-Gaussian case. For example, if the true copula is a $t$ copula, non-elliptical deformations are observed when visualizing $\Phi^{-1}(U_{i,t})$. Since the use of normal scores allows to detect deviations from the Gaussian copula, like asymmetry and tail dependence, it provides guidance in exploring appropriate copula specifications. Moreover, we find that the correlation matrix of normal scores also leads to accurate initializations for $\Lambda$ and $\bm{W}$ when starting the consistent copula estimation schemes we discuss next.  

\subsection{Maximum likelihood estimation} \label{subsecMl}

The self-consistent scheme of Fig. \ref{figSchemes}b illustrates the implicit copula based on PC generators for which we derived the copula density in Section \ref{subsecIndPcc}. Next, we explain maximum likelihood estimation (MLE) of the PCC. We start from the general log-likelihood based on the copula density from Eq. (\ref{eqCopDens}), which results in:
\begin{eqnarray}\label{eqLogL}
\ell(\boldsymbol{\theta}; U_{1,t},..,U_{d,t})&=&\sum_{t=1}^n {\rm ln}\, c(U_{1,t},...,U_{d,t};\boldsymbol{\theta})\nonumber\\&=& \sum_{t=1}^n {\rm ln}\, f_{\mathbf{Y}}(Y_{1,t},...,Y_{d,t};\boldsymbol{\theta}) -\sum_{t=1,i=1}^{n,d} {\rm ln}\, f_{Y_i}(Y_i;\boldsymbol{\theta}).
\end{eqnarray}
Here, $Y_{i,t}=F^{-1}_{Y_i}(U_{i,t};\boldsymbol{\theta})$, while $f_{Y_i}$, $F_{Y_i}$ and $f_{\mathbf{Y}}$ are given by Eqs. (\ref{eqCharPdf}), (\ref{eqCharCdf}) and (\ref{eqDensY}). The first two equations require the marginal characteristic function that is obtained from Eqs. (\ref{eqCharY}) and (\ref{eqCharYi}). The MLE is given by:
\begin{equation}\label{eqMlEst}
\hat{\boldsymbol{\theta}}=\underset{\boldsymbol{\theta} \in \boldsymbol{\Theta}}{\rm arg \,\, max \,\,} \ell(\boldsymbol{\theta}; U_{1,t},..,U_{d,t}).
\end{equation}
MLEs based on pseudo copula observations have the desirable asymptotic properties of being consistent, asymptotically normal and efficient, as shown by \cite{Genest1995}. The main challenge for the PCC is then to find a fast and accurate way to determine the copula density and the loglikelihood in Eq. (\ref{eqLogL}), which can consequently be used in optimization algorithms for Eq. (\ref{eqMlEst}). 

\subsubsection{Numerical approach} \label{subsecMlNum}

In order to perform the optimization from Eq. (\ref{eqMlEst}), we note that it is possible to efficiently evaluate Eq. (\ref{eqCharPdf}) and (\ref{eqCharCdf}) numerically with the Fast Fourier Transform, which has been used for option pricing in the finance literature by \cite{Carr1999}. Another efficient method from the option pricing literature is the COS method by \cite{Fang2009}, which is based on the following cosine expansion:
\begin{equation}\label{eqPdfCos}
f_{Y_i}(y) \simeq \frac{c_0}{2} +\sum^{N_c}_{k=1}c_k \cos\left( k\pi\frac{y-a}{b-a} \right).
\end{equation}
The expansion coefficients $c_j$ can be analytically expressed in terms of the characteristic function $\phi_{Y_i}$ as:
\begin{equation}
c_k = \frac{2}{b-a} {\rm Re} \left[\phi_{Y_i}\left(\frac{k\pi}{b-a}\right)e^{-i\frac{ka\pi}{b-a}}\right].
\end{equation}
For $F_{Y_i}$, a similar expansion in terms of sine functions can be performed: 
\begin{equation}\label{eqCdfCos}
F_{Y_i}(y) \simeq \frac{c_0}{2}(y-a)+\sum^{N_c}_{k=1}c_k \frac{b-a}{k\pi}\sin\left( k\pi\frac{y-a}{b-a} \right).
\end{equation}
Because $Y_i$ has unit variance and because the COS method converges exponentially, we typically find accurate results using: $a=-10,b=10$ and $N_c=100$. Having implemented the COS method, we can efficiently evaluate $f_{Y_i}(y_i)$ and $F_{Y_i}(y_i)$ on a coordinate grid of our choice. If we evaluate $F_{Y_{i}}(y_{i})=u_{i}$ on a sufficiently fine grid, we immediately also obtain $F^{-1}_{Y_{i}}(u_{i})=y_{i}$ on the same grid. Any other value for $F^{-1}_{Y_{i}}(u_{i})$ can be obtained by interpolation. An alternative method to sample from the inverse distribution function is stochastic collocation \citep{Grzelak2019}.

Next, the optimization of Eq. (\ref{eqMlEst}) is performed to obtain the ML estimates. The optimization is initialized with the linear correlation matrix of the normal scores $\Phi^{-1}(U_{i,t})$, which is the ML estimator for the correlation parameters of the Gaussian copula, and with plausible starting values of the shape parameters $\bm{\alpha}$. Gradient descent is then employed to update the parameters $\bm{\rho}_\mathbf{Y}$ and $\bm{\alpha}$. After each update of $\bm{\rho}_\mathbf{Y}$, we reperform eigenvalue decomposition (PCA) to obtain $\bm{\Lambda}$ and $\mathbf{W}$. The correlation matrices are kept positive semi-definite by requiring the eigenvalues of $\boldsymbol{\Lambda}$ to be nonnegative. 

Full ML optimization remains feasible in high-dimensions when additional structure is applied to the correlation matrix, so that it can be parameterized in a parsimonious way. In higher dimensions, with no additional predefined model structure, the number of correlation parameters quickly becomes very large for numerical optimization. Although a parsimonious model structure in itself is beneficial, it is usually not desirable to make a priori assumptions on the correlation matrix structure. Therefore, we propose an alternative estimation scheme in the next section.

\subsection{Generalized Method of Moments} \label{subsecGmm}

Besides ML, another popular class of estimators is the generalized method of moments (GMM) by \cite{Hansen1982}. It starts from a vector-valued function $\bm{g}(\bm{Y},\boldsymbol{\theta})$ with an expectation value of zero at the true parameter values $\boldsymbol{\theta}_0$. The expectation value of $\bm{g}$ can be estimated with the vector valued sample mean: 
\begin{equation}
\hat{\bm{m}}(\bm{\theta}) =  \frac{1}{n}\sum_{t=1}^n \bm{g}(\bm{Y}_t,\bm{\theta}).
\end{equation}
The GMM estimator is then given by:
\begin{equation}
\hat{\bm{\theta}} = \underset{\bm{\theta} \in \bm{\Theta}}{\rm arg \,\, min \,\,} \hat{m}(\bm{\theta})^T \bm{W}_g \hat{m}(\bm{\theta}).
\end{equation}
Here, $\bm{W}_g$ is a positive semi-definite weight matrix. The GMM estimator has the desirable asymptotic properties of being consistent and normally distributed. A suitable choice of the matrix $\bm{W}_g$ leads to the most efficient estimator based on the moment conditions \citep{Hansen1982}. 

In high dimensions, it is preferable to estimate the many correlation parameters with a moment estimator and the fewer shape parameters with an ML estimator. This is a common approach in financial applications when using a $t$-copula, where the correlation parameters are based on Kendall's tau estimates and the shape parameter $\nu$ using ML \citep{McNeil2005}. We develop a similar strategy for the PCC. First, we introduce the generalized moment equations for the scale parameters $\boldsymbol{\rho}_{\bm{Y}}$ and the shape parameters $\boldsymbol{\alpha}$, giving:
\begin{eqnarray}
    E[Y_i Y_j - \rho_{Y,i,j}]=E[F_{Y_i}^{-1}(U_i)F_{Y_j}^{-1}(U_j) - \rho_{Y,i,j}] &=& 0 \nonumber \\
   E[\nabla_{\bm{\alpha}} \ell(\bm{\rho}_{\bm{Y}}, \bm{\alpha}; U_{1},..,U_{d})] &=& 0 \label{EqCondMm}
\end{eqnarray}
In the second line, we have written the ML estimator as a GMM condition. This shows that an estimator based on the generalized moment conditions in Eq. (\ref{EqCondMm}) can be viewed as a GMM estimator.   

\subsubsection{Numerical approach} \label{subsecGmmNum}

Next, we explain how we numerically arrive at the solution of the GMM equations following from the conditions in Eq. (\ref{EqCondMm}). Note that $F_{Y_i}^{-1}$ depends itself on $\boldsymbol{\rho}_{\mathbf{Y}}$ and $\boldsymbol{\alpha}$, making the corresponding equations nonlinear. We propose an iterative procedure to solve the equations. The procedure initializes the correlation matrix $\hat{\rho}^{(0)}_{\bm{Y}}$ with the sample correlation of the normal scores $\Phi^{-1}(u_{i,t})$. Next, we decompose the correlation matrix into eigenvectors using PCA. In general, the following holds for the PCC: 
\begin{equation}
\rho_{Y,i,j}= E[Y_i Y_j]=\sum^d_{k,l=1} E[ w_{i,k}P_{k} w_{j,l} P_l]=\sum_{k=1}^d\Lambda_k w_{i,k}w^T_{k,j}. 
\end{equation}
Here, we used that $E[Y_i]=E[P_j]=0$, that $E[P^2_j]=\Lambda_j$ and that the PCs are uncorrelated. Having initialized the PC vectors $\hat{\bm{w}}_j^{(0)}$ and variances $\hat{\Lambda}_j^{(0)}$, we can determine the shape parameters $\boldsymbol{\alpha}$ using ML. To this end, we use Eqs. (\ref{eqCopDens}), (\ref{eqCharY}), (\ref{eqCharPdf}) and (\ref{eqCharCdf}) for the likelihood in combination with the COS method. An iterative algorithm can then solve the GMM equations, which we show in Algorithm \ref{algoPcc}. In order for the iterative algorithm to converge towards a fixed point, the usual conditions for contraction are required, i.e., that derivatives are smaller than one around the fixed point.

\begin{algorithm}[t]
\caption{Iterative algorithm to estimate the PCC with GMM \label{algoPcc}}
\begin{enumerate}[leftmargin=0.5cm]
    \item 
    Initialize scale parameters $\hat{\bm{\rho}}^{(0)}_{\bm{Y}}$ and shape parameters $\hat{\bm{\alpha}}^{(0)}$
    \item  
    For each recursion step $k=1,..,n_r$ update parameters as follows:
    \begin{enumerate}[leftmargin=0.75cm]
        \item Update correlation parameters given shape parameters:
            \begin{equation}
                    \hat{\rho}^{(k)}_{\mathbf{Y},i,j} = \frac{1}{n} \sum_{t} \hat{F}^{-1}_{Y_i} \left(U_{i,t};\hat{\rho}^{(k-1)}_{\bm{Y}}, \hat{\bm{\alpha}}^{(k-1)}\right) \hat{F}^{-1}_{Y_i} \left(U_{j,t};\hat{\rho}^{(k-1)}_{\bm{Y}}, \hat{\bm{\alpha}}^{(k-1)}\right) \nonumber
            \end{equation}       
        \item Perform PCA to update vector weights $\hat{w}_{i,j}^{(k)}$ and eigenvalues $\hat{\Lambda}_j^{(k)}$
        \item Update shape parameters given correlation parameters using ML:
            \begin{equation} 
            \hat{\bm{\alpha}}^{(k)}=\underset{\bm{\alpha}}{\rm arg \, max \,} \ell(\bm{\alpha}, \hat{\bm{W}}^{(k)},\hat{\bm{\Lambda}}^{(k)}; U_{1,t},..,U_{d,t}).\nonumber
            \end{equation}
    \end{enumerate}   
    \item Return estimated GMM parameters $\{\hat{\bm {\rho}}^{(n_r)}_{\bm{Y}}, \hat{\bm{\alpha}}^{(n_r)}\}$ 
\end{enumerate}
\end{algorithm}

 An advantage of the iterative approach is that it allows the correlation matrix to be determined with a moment estimator. This avoids having to numerically optimize over many correlation parameters, which is required for ML or Simulated Method of Moments when there is no closed-form expression for the correlation parameters in terms of the data available \citep{Oh2017}. As a result, these methods have been mainly applied to high-dimensional copulas with a reduced number of parameters by imposing additional structure to the correlation matrix. Although reducing complexity can be favorable, it is also beneficial to have a technique that avoids the necessity to make a priori assumptions on the correlation structure. In that case, a reduction in parameters can still be imposed, but only when it is deemed necessary in terms of estimation stability and accuracy.

In the next section, we show that the algorithm performs well in detecting PCs with large eigenvalues. In the studied examples, the estimation of the first eigenvalues and eigenvectors are found to be robust to the changes in shape parameters of the PC distributions. Consequently, the initialization steps already give accurate results, and only a few iteration steps are required to obtain the full solution. In the next section, we show good performance with the hybrid estimator of Algorithm \ref{algoPcc} for large copula models in high dimensions. 

\section{Applications}\label{secApplications}

We study the PCC and its estimators using simulated data and historical global market data.  

\subsection{Simulation study}\label{subsecSim}

We perform a simulation study for a 100-dimensional PCC as schematically visualized in Fig. \ref{figSchemes}b. When simulating, the scheme is followed from right to left. We start with generating simulations for $\bm{P}$ based on the specified generator family. After this, we transform to $\bm{Y}$ using the linear transformation $\bm{Y}=\bm{W}\bm{P}$. Finally, we apply $F_{Y_i}$ to obtain samples for the copula $\bm{U}$. The stochastic variables $Y_i$ have a mean equal to zero and a variance equal to $1$. Therefore, the first and second moments of $\bm{Y}$ are determined by the correlation matrix $\rho_{\bm Y}$. In our stylized example, the correlation matrix has the following structure:
\begin{equation} \label{eqCorrHighD}
\rho_{Y_i,Y_j} = \xi_{i}\xi_{j}  + \gamma_{i}\gamma_{j} 
\end{equation}
Here, we use the following expressions with $i$ ranging from 1 to $d$: 
\begin{equation}
\xi_{i} = \frac{2}{5}\left(1+e^{-i/d}\right),\quad \gamma_{i}=\frac{3}{5}\tanh(4i/d-2).
\end{equation}
This parametrization leads to $d(d-1)/2$ different correlation parameters with values ranging from $0.10$ to $0.96$. By performing PCA, we obtain the true eigenvector matrix $\bm{W}$ and the true eigenvalues $\bm{\Lambda}$ of $\rho_{\bm Y}$. The first two PC vectors are shown in Fig. \ref{figEigExample}. These vectors are seen to describe a parallel shock and an anti-parallel shock. The first PCs have resulting eigenvalues $\lambda_1=43.6$ and $\lambda_2=18.7$, respectively, which therefore explain 62\% of the variance. All higher PCs have eigenvalues less than 1, so that they do not carry relevant information.  

\begin{figure}
\centering
\includegraphics[width=0.7\columnwidth]{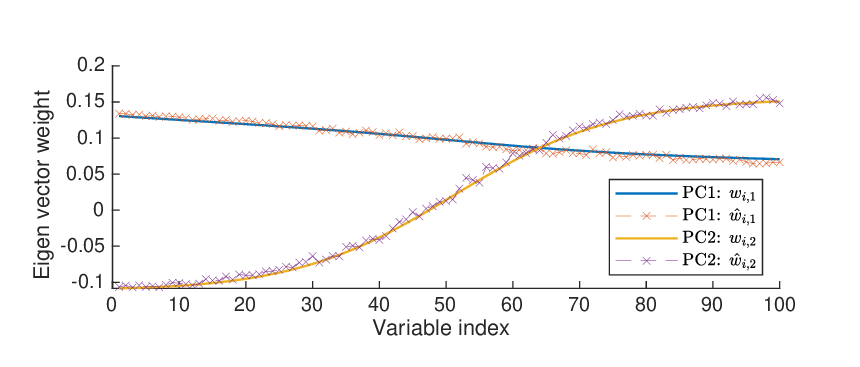}
\caption{True eigenvector weights $w_{i,j}$ and estimated eigenvectors weights $\hat{w}_{i,j}$ for the first two Principal Components $j \in \{1,2\}$ as a function of the variable index $i\in\{1,...,100\}$. }\label{figEigExample}
\end{figure}

The DGP for $\bm{Y}$ in our example is given by the linear tranformation $\bm{Y}=\bm{W}\bm{P}$ and the following distributions:
\begin{equation} \label{eqExPccHnSim}
P_j \sim  {\rm HB}(\alpha_j,\beta_j,\Lambda_j) \quad \forall j\in \{1,2\} , \quad P_j \sim N (0,\Lambda_j) \quad \forall j>2
\end{equation}
Here, ${\rm HB}(\alpha_j,\beta_j,\Lambda_j)$ denotes the hyperbolic distribution from Eq. (\ref{eqPdfHyp}), where $\mu$ and $\chi$ are taken such that the mean is equal to zero and the variance is equal to $\Lambda_j$. In Table \ref{tabEstHighD}, the parameters selected for the study are presented. The chosen parameters for the first two PC distributions give rise to skewness and tail dependence, due to the heavier tails of the hyperbolic distribution compared to the normal distribution. The parameters lead to lower tail dependence estimates between risk pairs in the range from 0.05 to 0.85. We consider a higher dimension and a larger number of shape parameters in the simulations than in the case study of Section \ref{subsecCase} in order to test the estimators under more challenging circumstances.

In the simulation study, we generate 1500 copula observations for each estimation, which is a similar number of observations to our empirical case study in Section \ref{subsecCase}. Following the specified DGP, we sample the PCs $\bm{P}$ independently from the hyperbolic and normal distributions with their true parameters $\alpha_j$, $\beta_j$ and $\Lambda_j$. Then, we transform the simulated data using $\bm{Y}=\bm{W}\bm{P}$ with $\bm{W}$ the true eigenvector matrix of $\rho_{\bm Y}$. Finally, we return $U_i = F_{Y_i}(Y_i)$ for the simulated copula observations. There are two different numerical approaches to obtain $F_{Y_i}$. The first method is to use Eqs. (\ref{eqCharYi}) and (\ref{eqCdfCos}), which converges exponentially \citep{Fang2009}. The second method is to simulate many observations for $Y_i$ and obtain a simulation-based empirical distribution function $\hat{F}_{Y_i}$ for $Y_i$. Due to its fast convergence, we use the first method. Having generated the copula data, we start with applying MLE for the 4 shape parameters $\hat{\alpha}_j$ and $\hat{\beta}_j$, taking the true eigenvector weights $w_{i,j}$ and eigenvalues $\Lambda_j$ as given. This hypothetical situation is used as performance reference. Next, the hybrid estimation procedure from Algorithm \ref{algoPcc} is applied, which estimates all correlation parameters and the 4 shape parameters jointly. We do not inform our estimation algorithm about the underlying parameter structure. 

In order to study the performance of the estimation procedure, 100 Monte Carlo replications of the full estimation procedure are generated and the mean and the standard deviation of the estimators are determined. In Table \ref{tabEstHighD}, the results for the four shape parameters using MLE and the results with Algorithm \ref{algoPcc} are presented. We also show the results for the first two eigenvalues $\hat{\Lambda}_j$. The estimation result for the corresponding eigenvectors $\hat{w}_{i,j}$ are shown in Fig. \ref{figEigExample}, which reveals that the eigenvector estimates are very accurate. From Table \ref{tabEstHighD}, it is seen that the MLEs for the shape parameters perform well in terms of bias and standard errors. This is expected, since the true eigenvector weights and eigenvalues were used and since MLE is consistent with minimal asymptotic variance. 

\begin{table}
\begin{center}
\caption{\label{tabEstHighD} Estimation results based on a simulation study for the PCC in 100 dimensions using 1500 simulated data points. The second column shows the true parameter values, the third column the MLEs for only the shape parameters (using the true values for the other parameters) and the fourth column the corresponding standard deviations based on 100 Monte Carlo replications of the estimation procedure. The fifth and sixth columns show the estimate and the standard deviations based on GMM using Algorithm \ref{algoPcc}.}
\vspace{0.3cm}
\begin{tabular}{ l l l l l l l}
\hline
Parameter       &  True &  ML    &        & &   GMM  &  \\
\hline 
 $\lambda_1$   &  43.61 &       &         & & 43.57 & (1.07)\\
 $\lambda_2$   &  18.70 &       &         & & 18.77 & (0.55)\\
 $\alpha_1$    &  0.50  & 0.47  &  (0.05) & & 0.47  & (0.06)\\
 $\beta_1$     &  -0.25 & -0.23 &  (0.04) & & -0.23 & (0.05)\\
 $\alpha_2$    &  1.00  & 0.99  &  (0.04) & & 1.02  & (0.07)\\
 $\beta_2$     &  0.25  & 0.26  &  (0.06) & & 0.26  & (0.07)\\
\hline 
\end{tabular}
\end{center}
\end{table}

We compare these ML shape estimates with the hybrid GMM estimator of Algorithm \ref{algoPcc}, where we need less than five iterations for satisfactory convergence in this example. The estimation results of Algorithm \ref{algoPcc} in the last columns are seen to give limited performance loss compared to the ML shape estimates, even though in addition also all correlation parameters were estimated. We conclude that the algorithm works well in high dimensions when the first PCs have large eigenvalues, so that these PC vectors can be robustly detected, avoiding the curse of dimensionality. Note that our example is stylized but realistic. In financial markets, we also typically observe a (few) dominant factor(s) in terms of variance. We have performed various simulation experiments with different distributions and different dimensionality, leading to similar results in terms of performance.

\begin{rem} 
In high-dimensions, it is possible to combine our algorithm with robust correlation estimators, such as eigenvalue clipping \citep{Bun2017} or shrinkage \citep{Ledoit2022}. In the first case, the eigenvalues of the lower PCs are replaced by a single eigenvalue such that the trace of the correlation matrix is preserved. This procedure lowers the number of parameters, because the correlation matrix is invariant for orthogonal transformations in the subspace of degenerate eigenvalues. Adjusting the eigenvalues with clipping or shrinkage is advised when the number of dimensions approaches the number of data points $d/n>0.1$. \citep{Ledoit2022} 
\end{rem}

\subsection{Case study}\label{subsecCase}

Next, we turn from simulated data to historic financial return data. We perform a case study of diversification and systemic risk in the global financial market. We are particularly interested in time scales of relevance to quantitative risk management and internal capital models, which are the main methods that financial institutions apply to protect themselves against systemic risk. We therefore use weekly return data for 20 major stock indices around the world. This leads to a lower dimension and to a lower amount of data compared to a copula study at an individual stock level with intradaily data, but consequently the study becomes more in line with the applications that we have in mind. An internal market risk capital model for banks is for example based on a 10-day horizon and for insurance companies on a one-year horizon \citep{McNeil2005}. The weekly data starts at 1 January 1998 and ends on 29 December 2022, resulting in 25 years of data, which is a typical data history for a financial institution when constructing internal models for market risk capital. Table \ref{tabIndices} shows the major global stock indices that are included in the analysis.

\begin{table}
\begin{center}
\caption{\label{tabIndices} Major global market stock indices that are included in the empirical analysis.}
\vspace{0.25cm}
\begin{tabular}{l l l l l l l} 
\hline
Nr.  & Index  &  Country    & & Nr.  & Index     &  Country   \\ 
\hline
 1 & DAX        &  Germany    & & 11 & Kospi     &  South Korea \\
 2 & FTSE 100   &  UK         & & 12 & Sensex    &  India       \\
 3 & CAC 40     &  France     & & 13 & IDX Comp. &  Indonesia \\
 4 & FTSE MIB   &  Italy      & & 14 & KLCI      &  Maleysia  \\
 5 & IBEX 35    &  Spain      & & 15 & SP 500    &  US  \\
 6 & AEX        &  Netherlands& & 16 & TSX Comp. &  Canada  \\
 7 & ATX        &  Austria    & & 17 & Bovespa   &  Brazil  \\
 8 & HSI        &  Hong Kong  & & 18 & IPC       &  Mexico  \\
 9 & Nikkei 225 &  Japan      & & 19 & All ord.  &  Australia  \\
 10 & STI       &  Singapore  & & 20 & TA 125    &  Israel  \\
\hline
\end{tabular}
\end{center}
\end{table}

\subsubsection{Time series model}\label{subsubsecTimeSeries}

We start the empirical analysis by applying the common AR(1)-GARCH(1,1) model to account for volatility clustering. The time series data of weekly logreturns $r_{i,t}$ for each index $i$ is therefore filtered as follows: 
\begin{eqnarray}
r_{i,t}&=& \delta_{G,i} + \phi_{G,i} r_{i,t-1}+ \epsilon_{i,t}, \quad 
\sigma^2_{G,i,t} = \omega_{G,i} + \alpha_{G,i} \epsilon^2_{i,t-1}+ \beta_{G,i} \sigma^2_{i,t-1}. \nonumber
\end{eqnarray}
We estimate the parameters $\delta_{G,i}, \phi_{G,i}, \omega_{G,i}, \alpha_{G,i}$ and $\beta_{G,i}$ using quasi-maximum likelihood, and we obtain the typical results for stock indices, namely small negative AR(1) coefficients for most stock indices, and highly significant GARCH(1,1) parameters indicative of volatility clustering.

A point of consideration for an internal capital model is stability, since enhanced temporal fluctuations in capital charges are not the preferred situation from a practical and regulatory point of view. A capital model that leads to a strong increase of capital requirements under market stress has the disadvantage that it requires financial institutions to raise additional capital or sell investments under unfavorable circumstances. When this is not desired, it is a consideration to replace the time-dependent volatility from a GARCH-model with either through-the-cycle or stressed estimates that are stable over time. For a similar reason, we do not consider time-varying copula parameters for the present application. Although a time-dependent copula can generate accurate short-term density forecasts by adapting to recent information in the market \citep{Opschoor2020, Oh2023}, such a copula model leads to lower stability over time. As a result, we focus in our study on the ability of static copula specifications to capture cross-sectional systemic risk. 

\subsubsection{Estimation results}\label{subsecCaseEst}

We proceed with the devolatized residuals $x_{i,t}=\epsilon_{i,t}/\sigma_{G,i,t}$, which we assume to be i.i.d. The pseudo-copula observations are obtained by using the rank-based transform with $u_{i,t} = \text{rank}(x_{i,t})/(n+1)$ and $n = 1303$ filtered return observations. We initialize the estimation procedure using the inverse normal transform $\Phi^{-1}(u_{i,t})$ and applying PCA. The obtained eigenvectors and eigenvalues are used as starting point for Algorithm \ref{algoPcc}, which we apply to obtain the full iterative solution of the GMM estimator for the hyperbolic-normal (HB-N) PCC. The procedure from Algorithm \ref{algoPcc} converges within a few iterations. The final vector weights are shown in Fig. \ref{figEigVecData}. The variable indices have the same order as in Table \ref{tabIndices}. 

The first PC-vector captures the movement of the market as a whole, since all vector weights $\hat{w}_{i,1}$ have the same sign. The Western European and the North American indices are found to have the strongest co-movement, while the Maleysian and the Indonesian indices have the weakest co-movement. The second PC gives the most important mode of diversification. In our case study, it describes the movement of the Asian indices ($i=8$-14) compared to the Western indices ($i=1$-7, 15-16). The other indices ($i=17$-20) have small vector weights $\hat{w}_{i,2}$, signalling that they fall geographically or economically in between these two categories. We also note that the Japanese index ($i=9$) has a smaller vector weight $\hat{w}_{9,2}$ than the other Asian indices in the second PC, meaning that it moves less strongly opposite to the Western indices. We conclude that detailed global diversification effects are automatically captured by the copula. The third PC mainly describes the joint movement of the American indices ($i=15$-18) relative to other indices, where especially the additional variance of the two Latin-American indices ($i=17,18$) is captured. 

\begin{figure}
\centering
\includegraphics[width=0.75\columnwidth]{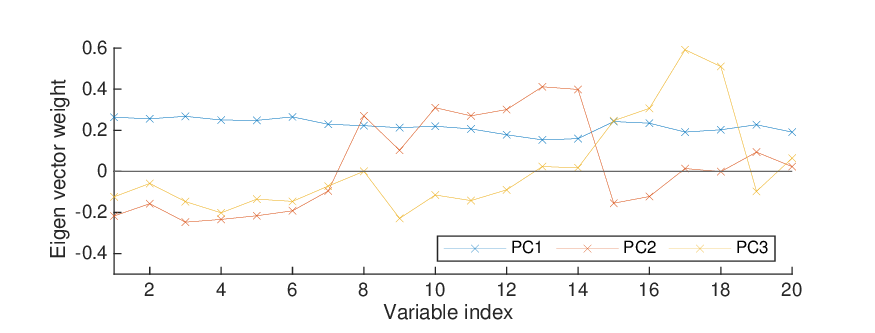}
\caption{Estimated eigenvectors weights $\hat{w}_{i,j}$ for the first three Principal Components $j \in \{1,2,3\}$ of the implicit copula as a function of the variable index $i\in\{1,...,20\}$. }\label{figEigVecData}
\end{figure}

The behavior of the estimated HB-N PCC is visualized using the steps of Fig. \ref{figSchemes}b. The result is shown in Fig. \ref{figHistSims}. Although the model is 20-dimensional, we select the FTSE-100 and the S\&P-500 indices for visualization at index level. The first plot shows historical filtered logreturns ($x_{i,t}$). The second plot shows historical (pseudo) copula observations ($u_{i,t}$). The estimated inverse marginal distribution functions $\hat{F}^{-1}_{Y_i}$ are used to obtain the third plot for the implicit copula returns ($y_{i,t}$). The PCA-based rotation is used to obtain the fourth plot for historical PC observations ($p_{j,t}$). When simulating, we perform the steps in the opposite direction. We start with sampling from the PC distributions to obtain $\tilde{p}_{j,s}$ and apply the (inverse) PCA-based rotation to obtain simulated implicit copula returns $\tilde{y}_{i,s}$. The marginal distribution functions $\hat{F}_{Y_i}$ are used to obtain simulated copula observations $\tilde{u}_{i,s}$ and the inverse empirical distribution functions $\hat{F}^{-1}_{X_i}$ are used to obtain filtered logreturns $\tilde{x}_{i,s}$. Fig. \ref{figHistSims} shows that we can visualize and assess the performance at each layer of the copula model. This holds in particular for the first two PCs, which are the core risk drivers that explain about 65\% of the total variance. Fig. \ref{figHistSims} also shows that the implicit copula returns $y_{i,t}$ have fewer outliers than the initial return observations $x_{i,t}$, which is beneficial for PCA. 

We perform the same estimation and simulation steps also for the Skew $t_1$-$t_{d-1}$ PCC from Eqs.~(\ref{eqNmvmix}) and (\ref{eqExPccSkewMvt}), leading to comparable visualisations of the copula. Finally, we introduce the Skew $t_1$-$t_1$ PCC, which is given by:
\begin{equation} \label{eqExPccSkewt}
P_1 \sim  {\rm Skew} \, t_1(\nu_1,\Lambda_1,\gamma_1), \quad  P_j \sim t_1(\nu_j,\Lambda_j) \, \forall  j \geq 2. 
\end{equation}
Here, ${\rm Skew}\, t_1(\nu_1,\Lambda_1,\gamma_1)$ and $t_1(\nu_j,\Lambda_j)$ denote one-dimensional distributions with degrees-of-freedom parameter $\nu_j$, asymmetry parameter $\gamma_1$, variance $\Lambda_j$ and zero mean. In case of the Skew $t_1$-$t_1$ PCC all generators are independent, where the first generator is modelled with the GH skew $t_1$ distribution and the higher generators with the $t_1$ distribution. This PCC therefore has similarities to the Skew $t_1$-$t_1$ factor copula studied by \cite{Oh2017}. The stochastic representation for this copula can be obtained along the lines of Eq. (\ref{eqNmvmix}) using an independent stochastic mixing variable $W_j$ for each of the higher generators $j$.

\begin{figure}[b]
\centering
\includegraphics[width=1.0\textwidth]{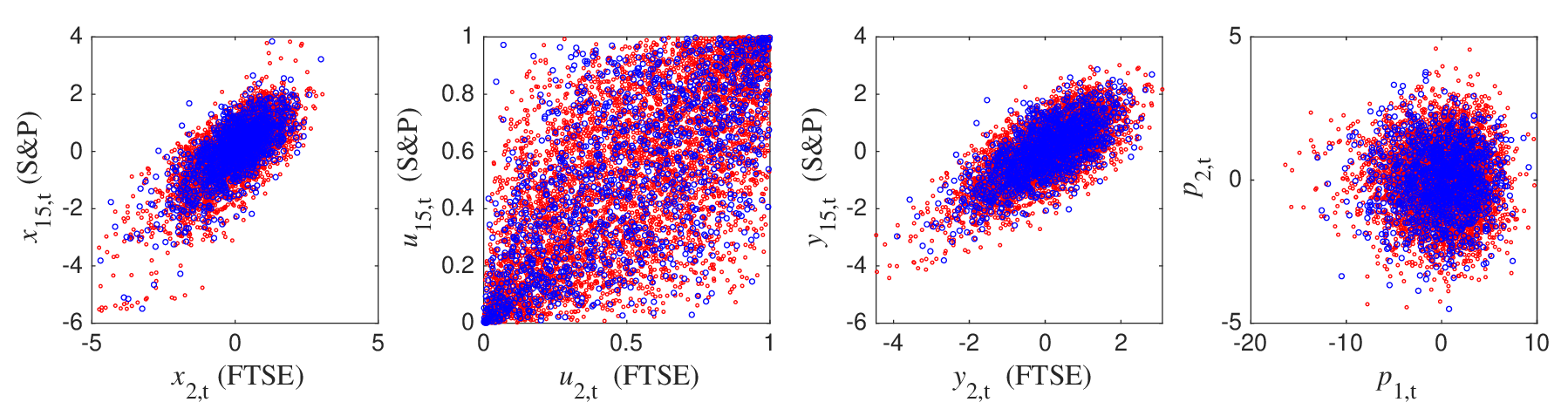}
\caption{Visualization of the model layers from Fig.\ref{figSchemes}b based on historical observations (blue dots) and on simulations from the estimated Hyperbolic-Normal PCC (red dots). Although the data and the model are 20-dimensional, the FTSE-100 and the S\&P-500 indices are used for visualization at index level. The left plot shows the (filtered) logreturn observations ($x_{i,t}$), the second plot shows the (pseudo-)copula observations ($u_{i,t}$), the third plot shows the implicit copula return observations ($y_{i,t}$) and the right plot shows the PC observations for the first two PCs ($p_{j,t}$). }\label{figHistSims}
\end{figure}

In Table \ref{tabEstCase20D}, the estimation results for various 20-dimensional copulas are shown. In all cases, we use a single DoF parameter $\nu$ in such a way that the generators have a common tail coefficient. We consider the Gaussian copula, the $t_d$ copula, and the GH skew $t_d$ copula as benchmark copulas. The Gaussian copula and the $t_d$ copula are most commonly used in practice at financial institutions. The Skew $t_d$ copula is an important extension due to its ability to capture asymmetry between up and down movements \citep{Yoshiba2018}. It has also been a popular choice in recent dynamic factor copula applications due to its analytic tractability \citep{Lucas2014, Ouyang2022, Oh2023}. We estimate the Gaussian, the $t_d$, and the Skew $t_d$ copula with ML estimation \citep{Yoshiba2018}. The Skew $t_d$ copula depends on a vector $\bm{\gamma}$ of asymmetry parameters which we align with the first eigenvector of the normal-mixture covariance matrix, so that $\bm{\gamma}=\gamma \bm{w}_{1}$. This choice aligns with the other PCCs, and the parallel direction $\bm{w}_{1}$ is also the relevant direction for including skewness in our case study. To determine the standard deviations of estimated parameters, we reperform all estimation procedures for 100 bootstrap samples of filtered returns \citep{Efron1994}. 

The results for the Gaussian copula are shown in Table \ref{tabEstCase20D} as a reference for the other copulas. As expected, the AIC and BIC improve significantly for the $t_d$ copula, as it incorporates tail dependence, and for the Skew $t_d$ copula, as it incorporates asymmetric dependence \citep{Lucas2014}. These normal mixture copulas are based on a single stochastic mixing variable. A novel aspect of our study is concerned with the optimal dimensionality of mixing variables. Recent factor copula studies have either assumed that all factors have one common mixing variable \citep{Opschoor2020, Ouyang2022, Oh2023}, or that all factors are independent \citep{Oh2017}. The HB-N and the Skew $t_1$-$t_{1}$ PCCs are also based on independent generators. \cite{Oh2017} have shown that such copula specifications lead to good performance for systemic risk, which we study in the next paragraph. However, these specifications are seen to be suboptimal in terms of likelihood. From Table \ref{tabEstCase20D}, we see that the Skew $t_1$-$t_{d-1}$ PCC leads to significant improvements in AIC and BIC over the other copula specifications. 

The interpretation for our results is that the Skew $t_1$-$t_{d-1}$ PCC benefits from having two stochastic mixing variables, namely $V$ for the first generator and $W$ for the higher generators in Eq. (\ref{eqNmvmix}). The first generator drives the parallel movement of all market indices, as shown in Fig. \ref{figEigVecData}, while the second and higher generators cause submarkets to move opposite to each other. The first generator is therefore important for systemic risk, while the second and higher generators lead to diversification. In crisis scenarios, a lack of diversification is expected, where all submarkets move in the same direction. This can be captured by a high-dimensional copula where $V$ and $W$ are decoupled, so that they do not grow large simultaneously. 

\begin{table}[t]
\begin{center}
\caption{\label{tabEstCase20D} Estimation results for the copula shape parameters based on 20-dimensional weekly return data. Also the loglikelihood $\ell$, and the improvement in AIC and BIC over the Gaussian copula are presented. Bootstrap standard deviations of estimates are shown in parentheses. The best performing copula, the Skew $t_1$-$t_{d-1}$ PCC, is highlighted in bold.}
\vspace{0.3cm}
\begin{tabular}{ l l l l l l l l} 
\hline
 & Gauss & $t_d$ & Skew $t_d$ & HB-N & Skew $t_1$-$t_1$ & Skew $t_1$-$t_{d-1}$ \\ 
\hline
$\hat{\nu}$    &      & 15.1 (1.2) & 14.0 (1.3)  &    & 12.7 (1.4)  &  14.9 (1.1)\\
$\hat{\gamma}$ &      &           & -2.7 (0.2) &       & -2.2 (0.5) & -2.8 (0.3) \\
$\hat{\alpha}$ &      &          &             & 1.3 (0.3) &       & \\
$\hat{\beta}$  &      &           &             & -0.7 (0.2) &       & \\
\hline
$ \ell$          &  10,505 & 10,895 & 11,001 & 10,668 & 10,812 & {\bf 11,041} \\
$ \Delta$ AIC    &   & -777 & -988 & -321 & -610 & {\bf -1068} \\
$ \Delta$ BIC    &   & -772 & -978 & -311 & -599 & {\bf -1058} \\
\hline
\end{tabular}
\end{center}
\end{table}

\subsubsection{High-dimensional dependence and systemic risk}\label{subsecCaseSystRisk}

In order to assess tail dependence and systemic risk, other measures than the likelihood are of importance. For pairwise dependence, the conditional probability of joint quantile exceedance (CPJQE) measures simultaneous exceedances of extreme quantiles. For downward exceedances, it is defined as: $\eta_{q,i,j}  = P[U_i \leq q \lvert U_j \leq q]$. The estimator for the CPJQE is obtained by:
\begin{equation}\label{eqCpjqe}
\hat{\eta}_{q,i,j}  =  
      \frac{1}{nq}\sum_{t=1}^n \bm{1}{ \{U_{i,t} \leq q, U_{j,t} \leq q \} }.
\end{equation}
This measure is often used by financial institutions to assess pairwise dependence over a range of quantiles $q$. The limit $q\rightarrow 0$ of $\eta_{q,i,j}$ leads to the coefficient of lower tail dependence $\eta_{i,j}$. In Fig. \ref{figCpjqe}, we show a pairwise CPJQE-plot for the FTSE-100 index and the S\&P-500 index as a function of the quantile $q$. We compare the empirical estimates $\hat{\eta}_{q,i,j}$ based on historic pseudo-copula observations with simulated estimates $\tilde{\eta}_{q,i,j}$ based on various copula specifications. For the estimated copula models, $10^6$ simulations are used to accurately determine $\tilde{\eta}_{q,i,j}$. The confidence intervals for the empirical estimates are determined by using $1,000$ bootstrap samples of filtered returns. The Gaussian copula and the Student $t_d$ copula are seen to have lower pairwise dependence in the tail than the historic data. The Skew $t_d$, the HB-N PCC, the Skew $t_1$-$t_{d-1}$ PCC and the Skew $t_1$-$t_1$ PCC lead to improved agreement. 

In order to assess systemic risk we are required to go beyond pairwise dependence. Instead of pairwise quantile exceedance, we consider the possibility of multiple market indices having a stressed downward return simultaneously. The fraction of market indices that have a joint downward exceedance of the quantile $q$ at time $t$ is given by the market distress ratio (MDR):
\begin{equation}  
{\rm MDR}_{t,q,d} =\frac{1}{d}\sum_{i=1}^d \bm{1} \{U_{i,t} \leq q \} .
\end{equation}
When the MDR approaches one, it means that all market indices are simultaneously in distress beyond the quantile $q$. The probability that the MDR exceeds a prespecified ratio $k/d$ was introduced by \cite{Oh2018} to study systemic risk in credit derivative markets. A similar measure was used by \cite{Ouyang2022} to study systemic risk in commodity markets. In both cases a dynamic copula was used to simulate the evolution of market distress probabilities over time.

\begin{figure}[t]
\centering
\includegraphics[width=0.85\columnwidth]{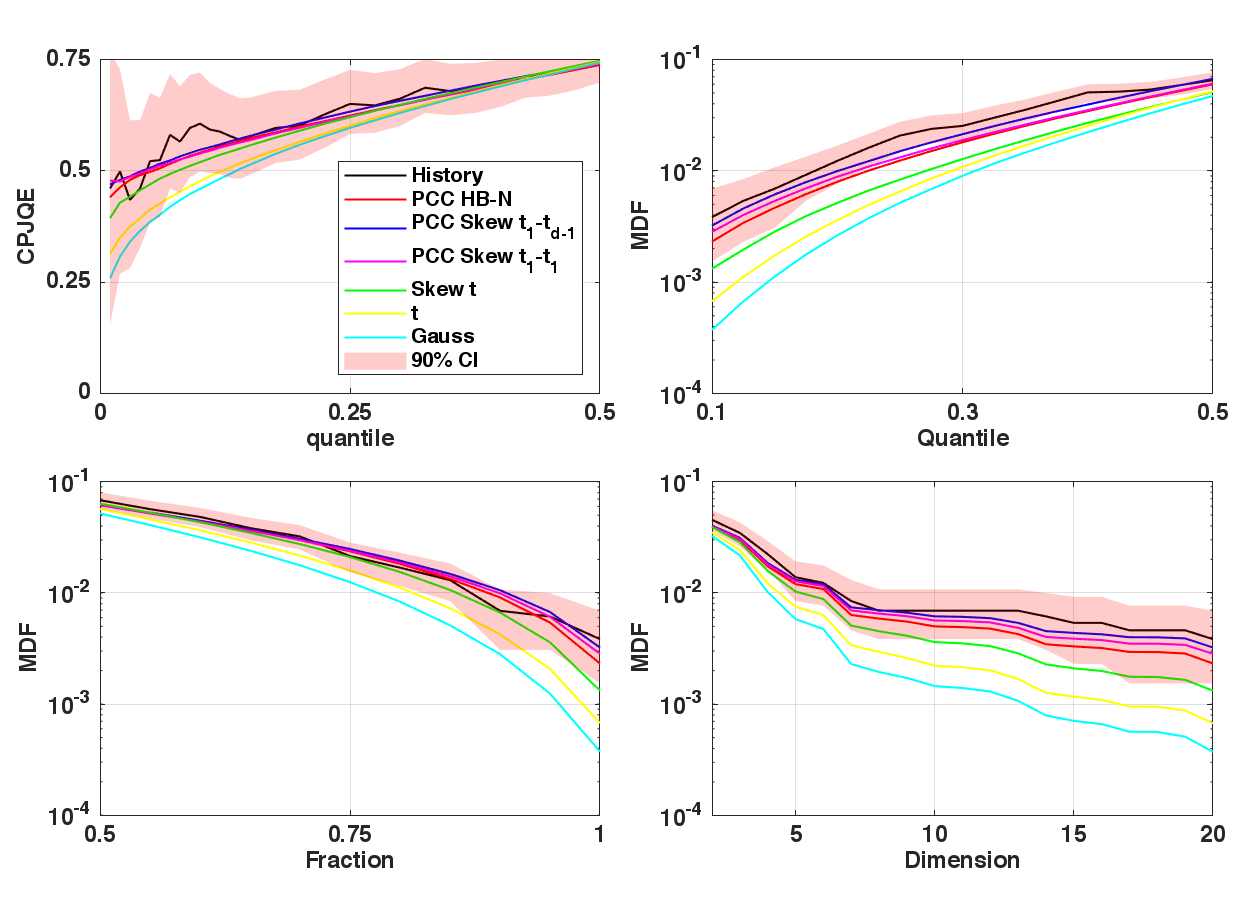}
\caption{Analysis of the market distress cube as a function of quantile, market fraction and market dimension. Upper left: pairwise CPJQE-plot $\hat{\eta}_{q,i,j}$ using historical pseudo-copula observations for the FTSE and S\&P index as a function of the quantile $q$ (black line). The CPJQE is also shown using simulations from various estimated (20-dimensional) copulae.  Upper right: Market distress frequencies ${\rm MDF}_{q,k,d}$ using historical pseudo-copula observations as a function of the quantile $q$ with $k/d=1$ and $d=20$ (black line). The MDF is also shown for the estimated copulae. Lower left: Market distress frequencies ${\rm MDF}_{q,k,d}$ as a function of the fraction $k/d$ with $q=0.1$ and $d=20$. Lower right: Market distress frequencies ${\rm MDF}_{q,k,d}$ as a function of the dimension $d$ with $q=0.1$ and $k/d=1$. In all plots bootstrap confidence intervals are given for the empirical estimates. The color coding is the same in all subplots. }\label{figCpjqe}
\end{figure}

In our study, we take a different approach. Namely, we perform a detailed comparison between observed market distress metrics and estimated copula specifications. We start by emphasizing that systemic risk can be studied as a function of an increasingly extreme quantile ($q$), as a function of an increasing market fraction in distress ($k/d$), and as a function of increasing market dimensionality ($d$). In all three directions the systemic risk grows, which brings us to the notion of the market distress cube. We perform a first empirical study of this cube, in order to assess the performance of various copula specifications in the three directions of systemic risk. In particular, we analyze the relative market distress frequency (MDF), which empirically measures the number of times that the market has been in distress:   
\begin{equation}  
{\rm MDF}_{q,k,d} = \frac{1}{n}\sum_{t=1}^n{\rm MDI}_{t,q,k,d} = \frac{1}{n}\sum_{t=1}^n \bm{1} \left\{{\rm MDR}_{t,q,d} \geq \frac{k}{d}\right\} 
\end{equation}
Here, the market distress indicator (MDI) specifies whether the stress event has occurred that $k$ or more market index returns simultaneously experienced a downward exceedance of the quantile $q$. The MDF is therefore an empirical counterpart of the joint distress probability considered by \cite{Oh2018}. 

Our analysis of the market distress cube is shown in Fig. \ref{figCpjqe}, where the relative market distress frequency ${\rm MDF}_{q,k,d}$ is studied in the three directions of systemic risk, namely the quantile ($q$), the market fraction ($k/d$), and the market dimensionality ($d$). In the right upper figure, the behavior of the MDF is shown as a function of $q$ on a logarithmic scale for $d=20$ and $k/d=1$, which means that all market indices simultaneously experience a downward exceedance of the quantile $q$. In the lower left plot, the behavior of the MDF is shown as a function of $k/d$ with $d=20$ and $q=0.1$. In the lower right plot, the behavior of the MDF is shown as a function of $d$ with $k/d=1$ and $q=0.1$. All plots of the MDF show that the estimated Gaussian copula, $t_d$ copula, and Skew $t_d$ copula lead to lower values of market distress frequencies than historically observed. The HB-N PCC, the Skew $t_1$-$t_1$ PCC and the Skew $t_1$-$t_{d-1}$ PCC perform better in capturing these measures of systemic risk. The Skew $t_1$-$t_{d-1}$ PCC shows the closest agreement with the historical data.

To further assess whether a copula specification is able to capture systemic risk, we propose to use a binomial test based on the Market Distress Indicator (${\rm MDI}_{t,q,k,d}$), which is Bernoulli distributed for a prespecified choice of $q$, $k$ and $d$. For each estimated copula, we accurately determine the model-implied distress probability based on a large number of simulations. After counting the number of historical MDI observations, we perform for each copula specification a one-sided binomial test with the null hypothisis of a correctly modelled distress probability against the alternative that the distress probability is larger than implied by the copula. The results for all copula specifications are shown in Table \ref{tabBinTest20D} for different combinations of the quantile $q$ and the fraction $k/d$ with $d=20$. As an example, there have been 16 observed weeks out of 1303 observations where all indices simultaneously experienced at least a 80\% worst-case loss ($q=0.2$). In case of the Gaussian copula and the $t$ copula, the $p$-value for this observation is lower than $10^{-4}$. For the Skew $t$ copula, the $p$-value is $0.002$. These normal mixture copulas would therefore be rejected by one-sided binomial tests at a significance level of 1\%. In case of the HB-N PCC copula, we find that the $p$-value of the test outcome is $0.06$, for the Skew $t_1$-$t_1$ PCC, the $p$-value is $0.12$ and for the Skew $t_1$-$t_{d-1}$ the $p$-value is $0.23$. As a result, the three PCCs pass the test for aggregate dependence at a significance level of 5\%.  

\begin{table}
\begin{center}
\caption{\label{tabBinTest20D} Binomial test results ($p$-values) for historical counts ($\#$) of the market distress indicator ${\rm MDI}_{t,q,k,d}$ based on 1303 filtered weekly observations of 20 major stock indices using three common high-dimensonal copulas and three PCCs. The test is performed for $d=20$, and various combinations of $q$ and $k/d$. The normal mixture copulas would be rejected, while the PCCs would be accepted at a significance level of 5\%.}
\vspace{0.3cm}
\begin{tabular}{ l l l l l l l l} 
\hline
$(q,k/d)$ & $\#$ & Gauss & $t_d$ & Skew $t_d$ & HB-N & Skew $t_1$-$t_1$ & Skew $t_1$-$t_{d-1}$ \\ 
\hline
(0.15,1)   & 9  & $<10^{-4}$ & 0.0006     & 0.014 & 0.16 & 0.27 & 0.41 \\
(0.2,1)   & 16 & $<10^{-4}$ & $<10^{-4}$ & 0.002 & 0.06 & 0.12 & 0.23\\
(0.15,0.9)  & 28 & $<10^{-4}$  & 0.0002  & 0.013 & 0.14 & 0.19 & 0.31 \\
(0.2,0.9)  & 44 & $<10^{-4}$   & 0.0001 & 0.009 & 0.11 & 0.13 & 0.26 \\
\hline
\end{tabular}
\end{center}
\end{table}

\section{Conclusion}
\label{sec:conc}

In this article, we have introduced the class of Principal Component Copulas, which combine the strong points of copula-based models with PCA-based models, such as detecting automatically the most relevant directions in high-dimensional data. Our study has given the first systematic analysis of the implicit copula structures that are generated by PCA. We have obtained novel theoretical results for PCCs. In particular, we have represented the full copula density in terms of characteristic functions. This allows computation of the density in high dimensions, while resolving only one-dimensional integrals that can be evaluated efficiently. We derived novel tail dependence properties of PCCs that are generated with exponential and Gaussian tailed distributions. This extends the existing analytic results on tail dependence for factor models which so far have relied on heavy tails. Furthermore, we have developed novel algorithms to estimate high-dimensional PCCs.
In order to achieve Maximum Likelihood Estimation, we have applied efficient numerical methods to compute the required one-dimensional integrals over characteristic functions such as the COS method. For an increasing number of parameters, we developed a novel algorithm that is designed to efficiently estimate a large number of correlation parameters using generalized method of moments and a small number of shape parameters based on MLE. We have shown excellent performance of this algorithm in a high-dimensional simulation experiment.

We have applied our techniques to the international stock market in order to study global diversification and systemic risk. We have shown that PCCs provide clear economic interpretations for their copula generators in terms of collective movements in the global market. Recent factor copula studies have either assumed that all factors have a common stochastic mixing variable, or that all factors are independent. We have proposed a PCC that is based on two stochastic mixing variables, namely one for the parallel direction and one for the orthogonal directions. This leads to enhanced performance on information criteria (AIC and BIC) and measures of systemic risk. We have expanded existing studies of systemic risk by introducing the notion of the market distress cube, in order to study systemic risk in three directions, namely more extreme crashes (quantile $q$), a larger market (dimension $d$), and increasing market share (fraction $k/d$). Finally, we have developed a binomial test to assess the ability of copulas to capture high-dimensional dependence. We find that PCCs pass such binomial tests, while commonly used copulas in finance fail to capture systemic risk, even when they perform well on pairwise dependence measures.  

Due to their analytic tractability, their explainability, and their ability to capture systemic risk, we consider PCCs to be promising for quantitative risk management and internal capital models. For further research, it would be interesting to allow for a dynamic dependence structure in the developed copula framework. This would make the copula less stable over time, but it would allow for more accurate short-term density forecasts based on recent information in the global market. Finally, because PCA and copula-based techniques are applicable to a wide range of data sets, it would be interesting to research the performance of the PCC on other data sets as well.  

\bibliography{references} 

\section*{Statements \& Declarations}

\subsection*{Funding}

The authors declare that no funds, grants, or other support were received during the preparation of this manuscript.

\subsection*{Competing Interests}

The views expressed in this article are the personal views of the authors and do not necessarily reflect the views or policies of their current or past employers. The authors have no competing interests.

\subsection*{Author contributions}

All authors contributed to the conception and design of the study. Data collection and technical implementation were performed by Koos Gubbels. The first draft of the manuscript was written by Koos Gubbels. All authors commented on previous versions of the manuscript. All authors read and approved the final manuscript.

\subsection*{Data availability}

The financial data for the case study is obtained from Yahoo Finance and is available upon request.

\subsection*{Code availability}

The code that is used to simulate and estimate Principal Component Copulas is available upon request.

\end{document}